\documentclass[a4paper,onecolumn,11pt]{quantumarticle}
\pdfoutput=1
\usepackage[utf8]{inputenc}
\usepackage[english]{babel}
\usepackage[T1]{fontenc}
\usepackage{amsmath}
\usepackage{amssymb}
\usepackage{physics}
\usepackage{hyperref}
\usepackage[table]{xcolor}
\usepackage{colortbl}
\usepackage{booktabs}
\usepackage{enumerate}
\usepackage{enumitem}
\usepackage{amsthm}
\usepackage{caption}
\usepackage{subcaption}

\usepackage{selinput}
\SelectInputMappings{
  adieresis={ä}, 
  acute={´},
}
\usepackage{amsmath}  

\usepackage[english]{babel}
\usepackage[T1]{fontenc}
\usepackage{lipsum}
\usepackage{mdframed}
\usepackage{booktabs}
\usepackage{adjustbox}
\usepackage{longtable}
\usepackage{tikz}
\usetikzlibrary{arrows.meta,positioning,calc,fit,matrix}
\usepackage{pgfplots}
\pgfplotsset{compat=1.17}
\usepgfplotslibrary{groupplots,statistics,fillbetween}
\pgfplotsset{colormap/viridis}

\usepackage[
  backend=biber,
  style=numeric,
  sorting=none      
]{biblatex}
\usepackage{xcolor}
\usepackage[table]{xcolor}
\usepackage{booktabs}
\usepackage{listings}
\usepackage{siunitx}
\usepackage{multicol}
\lstdefinestyle{TinyPython}{
  language        = Python,
  basicstyle      = \ttfamily\fontsize{5pt}{6pt}\selectfont,  
  numbers         = left,
  numberstyle     = \ttfamily\fontsize{5pt}{6pt}\selectfont,
  stepnumber      = 1,
  numbersep       = 3pt,
  breaklines      = true,
  breakatwhitespace = true,
  showstringspaces = false,
  frame           = single,
  xleftmargin     = 0pt,
  xrightmargin    = 0pt,
  aboveskip       = 0pt,
  belowskip       = 0pt,
  tabsize         = 2,
  columns         = flexible,
  keepspaces      = true,
}

\usepackage{pifont}  
\usepackage{newunicodechar}
\newunicodechar{✖}{\ding{55}}
\usepackage[margin=1in]{geometry}   
\usepackage{setspace}              
\usepackage{amssymb} 
\usepackage{bm}
\usepackage{amsmath}
\usepackage{amsthm}  
\usepackage{tikz}            
\usepackage{quantikz} 
\usepackage{enumitem}
\usepackage{setspace}              
\usepackage{bm}
\usepackage{amsmath,amssymb,amsthm}             
\usepackage{tikz}            
\usepackage{enumitem}
\usepackage{graphicx} 
\usepackage{algpseudocode}
\usepackage{algorithm}
\usepackage{algorithmicx}
\usepackage{algpseudocode}
\usepackage{amsthm}
\usepackage{xcolor}
\usepackage[most]{tcolorbox}
\usepackage{pgfplots}
\pgfplotsset{compat=1.17}

\usepackage{booktabs}
\usepackage{float}                 
\usepackage{hyperref}              
\usepackage{url}
\usepackage{xcolor}                
\usepackage{lscape}                
\usepackage{caption}
\usepackage{subcaption}
\usepackage{etoolbox}
\AtBeginEnvironment{abstract}{\raggedright}
\usepackage{amsmath,amssymb,amsthm}
\newtheorem{theorem}{Theorem}
\newtheorem{lemma}[theorem]{Lemma}
\newtheorem{proposition}[theorem]{Proposition}
\newtheorem{corollary}[theorem]{Corollary}
\newtheorem{definition}[theorem]{Definition}
\DeclareUnicodeCharacter{0301}{'} 

\theoremstyle{remark}
\newtheorem{remark}[theorem]{Remark}

\newcommand{\OH}{\mathcal{H}_{\mathrm{OH}}}

\newcommand{\supp}{\operatorname{supp}}

\makeatletter
\g@addto@macro\normalsize{%
    \abovedisplayskip 3pt plus 1pt minus 1pt%
    \abovedisplayshortskip 3pt plus 1pt minus 1pt%
    \belowdisplayskip 3pt plus 1pt minus 1pt%
    \belowdisplayshortskip 3pt plus 1pt minus 1pt%
}
\makeatother

\def\BibTeX{{\rm B\kern-.05em{\sc i\kern-.025em b}\kern-.08em
    T\kern-.1667em\lower.7ex\hbox{E}\kern-.125emX}}

\author{Chinonso Onah}
\affiliation{Volkswagen AG, Berliner Ring 2, 38440 Wolfsburg, Germany}
\affiliation{Department of Physics, RWTH Aachen University, 52056 Aachen, Germany}
\email{chinonso.calistus.onah@volkswagen.de}

\author{Kristel Michielsen}

\affiliation{Forschungszentrum Jülich, Germany}
\affiliation{Universit\"at zu K\"oln, 50923 K\"oln, Germany}

\title{Optimal, Qubit-Efficient Quantum Vehicle Routing via Colored-Permutations}

\begin{document}
\maketitle

\begin{abstract}
\noindent
We formulate a global-position colored-permutation encoding for the capacitated vehicle routing problem. Each of the $K$ vehicles selects a disjoint partial permutation, and the sum of these $K$ color layers forms a full $n\times n$ permutation matrix that assigns every customer to exactly one visit position. On the benchmark suite of \cite{Osaba2024Qoptlib}, our end-to-end pipeline recovers the independently verified optima. This representation uses $n^2K$ binary decision variables arranged as $K$ color layers over a common permutation structure, while vehicle capacities are enforced by weighted sums over the entries of each color class—requiring no explicit load register, and hence no extra logical qubits, beyond the routing variables. In contrast, prior quantum encodings introduce an explicit capacity/load representation with $O(Q)$ or $O(\log Q)$ additional qubits for capacity $Q$ \cite{FeldEtAl2018CVRP,PalackalEtAl2023QCVRP,bentley2022q,Xie2024CVRP}. Our construction is designed to exploit the Constraint-Enhanced QAOA (CE-QAOA) framework \cite{onahce} together with its encoded-manifold analyses \cite{onahce,onahfund,onahfinite}. Building on our recent requirements analysis of quantum utility in CVRP~\cite{Onah2025Requirements}, we develop a routing optimization formulation that directly targets one of the main near-term bottlenecks, namely the additional logical-qubit cost of vehicle labels and explicit capacity constraints. Our proposal shows strong algorithmic performance in addition to qubit efficiency.  All results are fully reproducible from the companion artifact archive in Ref.~\cite{onahcvrpdata}. The feasibility oracle introduced in Alg.~\ref{alg:feasible-global-positions} may also be of independent interest as a reusable polynomial-time decoding and certification primitive for quantum and quantum-inspired routing pipelines.\footnote{\textbf{Data Availability:} All data in this paper and the Python implementation in Qiskit are made available here \url{https://doi.org/10.5281/zenodo.18798145}.
}
\end{abstract}

\section{Introduction}
\label{sec:back}

\subsection{The state of the art in quantum circuit--based routing optimization algorithms}
\label{subsec:sota-routing}

The current gate-model literature on routing optimization remains concentrated in a very small toy-instance regime. Early direct QAOA studies of the vehicle routing problem considered instances such as \((4,2)\), \((5,2)\), and \((5,3)\), where \((n,K)\) denotes the number of locations and vehicles, respectively \cite{Azad2023VRPQAOA}. Even in that line of work, the focus was primarily on demonstrating the formulation and parameter sensitivity of QAOA rather than establishing a scalable routing pipeline \cite{Azad2023VRPQAOA}. More recent gate-model work has likewise remained small and  full hardware realizations of QAOA-based routing have been demonstrated only for three-node, two-vehicle instances \cite{Azfar2025VRPHardware}, while heterogeneous-routing simulations have typically reached only a few customers, for example three customers and two trucks in a 21-qubit encoding \cite{Fitzek2024HVRP}. Likewise, several CVRP-oriented studies have avoided a direct end-to-end routing circuit altogether by decomposing the problem into a clustering phase and a set of TSP subproblems\cite{FeldEtAl2018CVRP, PalackalEtAl2023QCVRP}. In that setting, the gate-model experiments are typically restricted to \(4\)-, \(5\)-, and \(6\)-node TSPs, with only the \(4\)-node case reported to perform satisfactorily on actual hardware \cite{PalackalEtAl2023QCVRP}. Even feasibility-preserving CVRP proposals remain experimentally limited to simple instances because of simulation constraints \cite{Xie2024CVRP}.

A closely related perspective was developed in Ref.~\cite{Onah2025Requirements}, where the requirements for early quantum utility in CVRP were formulated in an encoding-agnostic way through explicit qubit- and gate-feasibility criteria. Ref.~\cite{Onah2025Requirements} showed that the central obstacle is the resource profile induced by the qubit count and circuit depth. The present paper may be read as a constructive response to the requirements problem identified there. The present work targets the qubit overhead associated with routing encodings, and in particular the overhead caused by explicit vehicle and capacity bookkeeping.  We introduce a  formulation designed so that capacity is enforced through diagonal load expressions on the native decision register, without introducing a separate load register or slack-variable block; see Proposition~\ref{prop:no-ancilla-capacity} and the surrounding discussion in Sec.~\ref{subsec:cvrp-global}. This eliminates the additional logical-qubit cost of capacity handling and allows the available qubit budget to be spent on the actual assignment-and-ordering degrees of freedom of the routing problem. In the one-hot setting, this already allows us to solve the \(K=2\) benchmark instances in Table~\ref{tab:numerical} up to the \(n=8\). A further qubit saving emerges by replacing the linear dependence on \(K\) in standard CVRP formulations with a logarithmic vehicle-label dependence inside a composite symbol register, moving quantum routing into the regime of small-scale industrial relevance discussed in Fig.~\ref{fig:routing-scale} and Sec.~\ref{sec:indu}.

The performance claims are discussed in Sec. \ref{sec:beyond-classical} and the underlying analytical mechanism reviewed in Appendix~\ref{app:fejer-mechanism}. There we summarize the positive-filtering mechanism based on the dephased Fej\'er reference model, define the mixer envelope and wrapped phase quantities entering the analysis, and restate the finite-depth and finite-shot success bound specialized to the present routing construction.

\subsection{Constraint–Enhanced QAOA}
\label{sec:ce-qaoa}
A central objective of this work is to formulate capacitated vehicle routing in a way that remains as close as possible to the native Onah--Firt--Michielsen (OFM) kernel of Ref.~\cite{onahce} to exploit the shallow depth and ancillae free constructions proposed therein. However, CVRP with  heterogeneous routing data, and in particular demand-dependent capacity information, can break the full symmetry structure that underlies the clean kernel analysis\cite{onahfund, onahfinite}. Our aim is therefore to preserve the core CE--QAOA architecture wherever possible while isolating the sources of symmetry breaking in a controlled way.  To achieve this, we need a formalism  that preserves the assignment-and-ordering structure of CVRP in the encoded-manifold. This leads to the global-position colored-permutation formalism where each global position carries one customer--vehicle label, each customer appears exactly once, and the resulting sequence of labels determines the visit order along the routing timeline. This gives a routing description whose native combinatorial structure aligns with CE--QAOA whose basic ideas we recall presently.

The quantum dynamics in CE-QAOA is rooted in the \emph{quantum alternating–operator ansatze} (``QAOA+'') paradigm\cite{Hadfield2019AOA} which acts invariantly on constraint projectors,
thereby confining evolution to structured subspaces
\cite{Hadfield2019AOA,Fuchs2022ConstrainedMixers,tsvelikhovskiy2024symmetries}. The \emph{Constraint–Enhanced QAOA} (CE–QAOA) introduced in Ref. \cite{onahce} follows this direction but makes the problem–algorithm co–design explicit. It introduces a kernel (the Onah--Firt--Michielsen kernel) designed to operate on a tensor product of blocks of one–hot manifolds and \emph{fixes} the mixer family and initial states to match the encoding as illustrated in Fig. \ref{fig:global-entangler-block-mixers}. It subsequently expands the codesign choice to include symmetries derived from the hard constraints satisfied by valid solutions\cite{onahce}; thus, realizing a strong alignment with problem structure \cite{Li2021CoDesign}. A key point to note is that the quantum dynamics we study here  need not preserve the feasible set itself, which is typically problem-specific and requires deeper and highly specialized circuit constructions with additional ancilla qubits. See, for example, feasibility-preserving mixer and initialization schemes for CVRP in~\cite{Xie2024CVRP,bentley2022q}.  We recall the formal specifications of the Onah--Firt--Michielsen (OFM) kernel in  Def. \ref{def:kernel-requirement}.

\begin{definition}[The Onah--Firt--Michielsen (OFM) kernel]
\label{def:kernel-requirement}
An optimization instance $I$ belongs to the \emph{Onah--Firt--Michielsen (OFM) kernel} if there exist
integers $n,m\in\mathbb N$ and the one-hot encoder $\mathsf E_{\mathrm{1hot}}$
that initializes the dynamics in the fixed-Hamming-weight space
\[
\OH \;=\; (\mathcal H_1)^{\otimes m},
\qquad
\mathcal H_1 \;=\; \mathrm{span}\{\ket{e_1},\dots,\ket{e_n}\}
\quad\text{(one excitation per block)}.
\]
The problem Hamiltonian splits as
\[
H_C \;=\; H_{\mathrm{pen}} \;+\; H_{\mathrm{obj}},
\]
where $H_{\mathrm{obj}}$ is diagonal in the computational basis and represents the objective. The penalty Hamiltonian $H_{\mathrm{pen}}$ is also diagonal in the computational basis and satisfies the structural conditions below.
\begin{enumerate}[label=\textup{(\alph*)}, leftmargin=2.2em]
\item \emph{Penalty structure.} $H_{\mathrm{pen}}$ is a sum of squared affine
      one-hot/degree/capacity penalties (optionally plus linear forbids) with
      integer coefficients bounded by $\mathrm{poly}(n)$. Consequently,
      $\mathrm{spec}(H_{\mathrm{pen}})\subseteq\{0,1,\dots,t_{\max}\}$ with
      $t_{\max}=O(m)=\mathrm{poly}(n)$.
\item \emph{Pattern symmetry.} $H_{\mathrm{pen}}$ is invariant under
      (i) block permutations $S_m$ and (ii) global symbol relabelings $S_n$.
      Hence the configuration space decomposes into level sets
      $L_t=\{x:\, H_{\mathrm{pen}}(x):=\langle x \mid H_{\mathrm{pen}} \mid x \rangle=t\}$ that are preserved setwise.
\item \emph{Mixer and initial state.} The block-local normalized XY mixer is
      \begin{equation}
      \label{eq:mixer}
      \widetilde H_{XY}^{(b)} \;=\;
      \frac{1}{n-1}\sum_{1\le u<v\le n}(X_u^{(b)}X_v^{(b)}+Y_u^{(b)}Y_v^{(b)}),
      \end{equation}
      with $\|\widetilde H_{XY}^{(b)}\|=O(1)$ on each block. The initial state is the uniform one-hot product
      \[
      \ket{s_0}\;=\;\ket{s_{\mathrm{blk}}}^{\otimes m},
      \qquad
      \ket{s_{\mathrm{blk}}}\;=\;\frac{1}{\sqrt n}\sum_{r=1}^{n} \ket{e_r}
      \quad\text{(a $W_n$ state per block)}.
      \]
\end{enumerate}
\end{definition}

Consequently, for an instance in the kernel with $m$ blocks of local dimension $n$, the initial state is
\[
  \ket{s_0}\;=\;\ket{s_{\mathrm{blk}}}^{\otimes m},
\]
and the mixer unitary is
\[
  U_M(\beta)\;=\;\bigotimes_{b=1}^{m}\exp\!\bigl(-i\beta\,\widetilde H_{XY}^{(b)}\bigr).
\]
A depth-$p$ CE--QAOA stack is
\[
  \ket{\psi_p(\vec\gamma,\vec\beta)}
  \;=\;
  \Bigl(\prod_{\ell=1}^{p} U_M(\beta_\ell)\,e^{-i\gamma_\ell H_C}\Bigr)\ket{s_0},
  \qquad
  \vec\gamma=(\gamma_1,\dots,\gamma_p),\;
  \vec\beta=(\beta_1,\dots,\beta_p).
\]
Because $U_M$ preserves the one-hot sector and $H_C$ is diagonal, each layer maps the encoded manifold $\OH=(\mathcal H_1)^{\otimes m}$ to itself.


\paragraph{Two routing use cases and notation.}
The goal of this paper is to study two routing use cases---(A) the capacitated vehicle routing problem (CVRP) and (B) the pickup-and-delivery problem (PDP)---and develop an OFM kernel realization for each within a unified ``colored permutation'' routing representation. Accordingly, we present two routing formulations and their associated Hamiltonian constructions. To keep notation unambiguous when both constructions are referenced side-by-side, we label all CVRP Hamiltonian terms with a superscript \(A\) and all PDP terms with a superscript \(B\) (e.g., \(H_{\mathrm{obj}}^{A}\), \(H_{\mathrm{pen}}^{A}\) versus \(H_{\mathrm{obj}}^{B}\), \(H_{\mathrm{pen}}^{B}\)). The specification for PDP is given in App. \ref{subsec:pdp-global}. Unless explicitly stated otherwise, our numerical study reports results for the CVRP instances only. The superscript \(A\) therefore persists throughout the Hamiltonian and circuit specification in the main text.

\newcommand{\NB}{3}         
\newcommand{\Nk}{4}         
\newcommand{\TotWires}{12}  

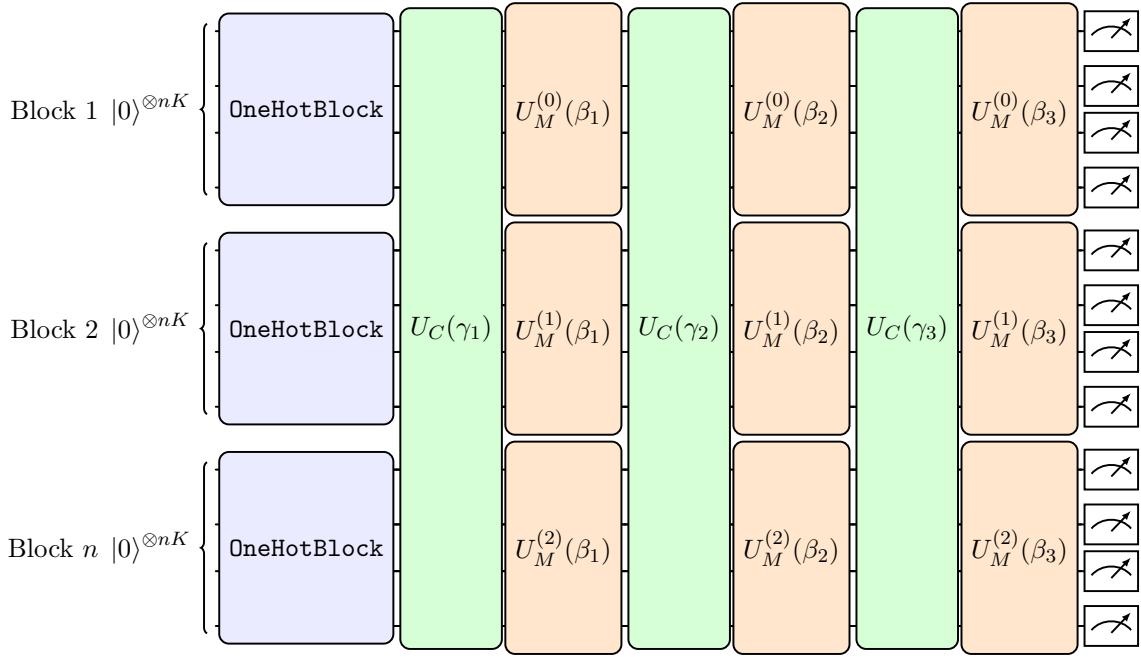
\begin{figure}[htbp]
  \centering
  \resizebox{\textwidth}{!}{%
  \begin{quantikz}[row sep=0.1cm, column sep=0.05cm]
  \lstick[wires=4]{$\text{Block }1\; \ket{0}^{\otimes nK}$}
      & \gate[wires=4,style={rounded corners,fill=blue!8}]{\texttt{OneHotBlock}}
      & \qw
      & \gate[wires=12,style={rounded corners,fill=green!16}]{U_C(\gamma_1)}
      & \gate[wires=4,style={rounded corners,fill=orange!20}]{U_M^{(0)}(\beta_1)}
      & \qw
      & \gate[wires=12,style={rounded corners,fill=green!16}]{U_C(\gamma_2)}
      & \gate[wires=4,style={rounded corners,fill=orange!20}]{U_M^{(0)}(\beta_2)}
      & \qw
      & \gate[wires=12,style={rounded corners,fill=green!16}]{U_C(\gamma_3)}
      & \gate[wires=4,style={rounded corners,fill=orange!20}]{U_M^{(0)}(\beta_3)}
      & \qw
      & \meter{} \\
  & & \qw
    & \qw & \qw & \qw
    & \qw & \qw & \qw
    & \qw & \qw & \qw
    & \meter{} \\
  & & \qw
    & \qw & \qw & \qw
    & \qw & \qw & \qw
    & \qw & \qw & \qw
    & \meter{} \\
  & & \qw
    & \qw & \qw & \qw
    & \qw & \qw & \qw
    & \qw & \qw & \qw
    & \meter{} \\
  \lstick[wires=4]{$\text{Block }2\; \ket{0}^{\otimes nK}$}
      & \gate[wires=4,style={rounded corners,fill=blue!8}]{\texttt{OneHotBlock}}
      & \qw
      & \qw
      & \gate[wires=4,style={rounded corners,fill=orange!20}]{U_M^{(1)}(\beta_1)}
      & \qw
      & \qw
      & \gate[wires=4,style={rounded corners,fill=orange!20}]{U_M^{(1)}(\beta_2)}
      & \qw
      & \qw
      & \gate[wires=4,style={rounded corners,fill=orange!20}]{U_M^{(1)}(\beta_3)}
      & \qw
      & \meter{} \\
  & & \qw
    & \qw & \qw & \qw
    & \qw & \qw & \qw
    & \qw & \qw & \qw
    & \meter{} \\
  & & \qw
    & \qw & \qw & \qw
    & \qw & \qw & \qw
    & \qw & \qw & \qw
    & \meter{} \\
  & & \qw
    & \qw & \qw & \qw
    & \qw & \qw & \qw
    & \qw & \qw & \qw
    & \meter{} \\
  \lstick[wires=4]{$\text{Block } n\; \ket{0}^{\otimes nK}$}
      & \gate[wires=4,style={rounded corners,fill=blue!8}]{\texttt{OneHotBlock}}
      & \qw
      & \qw
      & \gate[wires=4,style={rounded corners,fill=orange!20}]{U_M^{(2)}(\beta_1)}
      & \qw
      & \qw
      & \gate[wires=4,style={rounded corners,fill=orange!20}]{U_M^{(2)}(\beta_2)}
      & \qw
      & \qw
      & \gate[wires=4,style={rounded corners,fill=orange!20}]{U_M^{(2)}(\beta_3)}
      & \qw
      & \meter{} \\
  & & \qw
    & \qw & \qw & \qw
    & \qw & \qw & \qw
    & \qw & \qw & \qw
    & \meter{} \\
  & & \qw
    & \qw & \qw & \qw
    & \qw & \qw & \qw
    & \qw & \qw & \qw
    & \meter{} \\
  & & \qw
    & \qw & \qw & \qw
    & \qw & \qw & \qw
    & \qw & \qw & \qw
    & \meter{} \\
  \end{quantikz}}
  \caption{Global-position CE-QAOA with \(\NB\) blocks, each of size \(nK=\Nk\) wires.
  The diagonal entangler \(e^{-i\gamma_\ell H_C}\) spans all \(\NB\cdot \Nk\) wires (green), while
  the mixers \(U_M^{(j)}(\beta_\ell)\) (orange) act \emph{independently} within each block \(j\).
  Three layers (\(\ell=1,2,3\)) are shown.}
  \label{fig:global-entangler-block-mixers}
\end{figure}

\section{Kernelizing Vehicle Routing Problem}
\label{subsec:cvrp-global}

\subsection{Encoding All Valid Routes as Coloured Permutations}
\paragraph{Setup.}
We adopt the \emph{global-position} encoding throughout. Let $n$ be the number of items (customers for CVRP or pickup--delivery tours for PDP) and $K$ the number of vehicles. Then  we have $n$ customers (items) indexed by $i\in[n]$, $K$ trucks indexed by $k\in[K]$, and
$n$ \emph{global positions} indexed by $j\in[n]$. Consider items $i\in[n]$ with demands $d_i>0$, trucks $k\in[K]$ with capacities $Q_k>0$. Each truck $k$ has a depot $\mathtt{dep}_k$.
Let $w_{a,b}\ge 0$ be a (possibly asymmetric) metric on $\{\mathtt{dep}_1,\dots,\mathtt{dep}_K\}\cup[n]$. A \emph{configuration} says: “at position $j$, place item $i$ and assign it to truck $k$.”
So each position $j$ chooses \emph{one} pair $(i,k)$ out of the local alphabet
\[
\mathcal A\;=\;\{(i,k): i\in[n],\,k\in[K]\},
\qquad
|\mathcal A|=nK.
\]

Introduce binary variables
\[
x_{i,j,k}\in\{0,1\}\quad (i\in[n],\,j\in[n],\,k\in[K]),
\]
meaning $x_{i,j,k}=1$ iff “at position $j$ we place item $i$ on truck $k$.”
Stacking these gives an $n\times n\times K$ $0/1$ tensor $X=(x_{i,j,k})$,
or equivalently a length-$n^2K$ bitstring in lexicographic order. 
Saying “each block $j$ chooses exactly one symbol $(i,k)$” is exactly
\begin{equation}
\label{eq:block-onehot}
\forall\,j\in[n]:\qquad
\sum_{i=1}^n\;\sum_{k=1}^K x_{i,j,k}\;=\;1.
\end{equation}
In the quantum encoding, this is the one-hot subspace $\mathcal H_1$ on each block. We also want “each item $i$ is used exactly once somewhere (by some truck, at some position)”:
\begin{equation}
\label{eq:item-once}
\forall\,i\in[n]:\qquad
\sum_{j=1}^n\;\sum_{k=1}^K x_{i,j,k}\;=\;1.
\end{equation}

Given $X=(x_{i,j,k})$, define the $n\times n$ matrix
\begin{equation}
\label{eq:sum-over-k}
P\;\;:=\;\;\bigl(p_{i,j}\bigr)_{i,j=1}^n,
\qquad
p_{i,j}\;:=\;\sum_{k=1}^K x_{i,j,k}.
\end{equation}
Intuitively, $p_{i,j}=1$ means “item $i$ is placed at position $j$ (by \emph{some} truck),”
while $p_{i,j}=0$ means “no truck places item $i$ at position $j$.”

\medskip
We now prove that the feasible bitstrings under \eqref{eq:block-onehot} and \eqref{eq:item-once}
are exactly those whose $K$ slices sum to a \emph{permutation matrix} over $(i,j)$,
i.e.\ $P$ has one $1$ in each row and each column (and $0$ elsewhere).
Equivalently, a feasible $X$ is a \emph{colored decomposition} of a permutation matrix:
the $K$ color layers $\{X^{(k)}\}_{k=1}^K$ (where $X^{(k)}=(x_{i,j,k})_{i,j}$)
are disjoint $0/1$ matrices that add up to $P$.

\begin{lemma}[One-hot $+$ item-once $\Longrightarrow$ $P$ is a permutation matrix]
\label{lem:perm}
Suppose $X=(x_{i,j,k})$ satisfies \eqref{eq:block-onehot} and \eqref{eq:item-once}.
Then $P$ in \eqref{eq:sum-over-k} is a permutation matrix: for each column $j$,
$\sum_i p_{i,j}=1$, and for each row $i$, $\sum_j p_{i,j}=1$, with $p_{i,j}\in\{0,1\}$.
\end{lemma}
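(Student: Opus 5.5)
The plan is to work directly from the definition \eqref{eq:sum-over-k} of $P$ and rewrite the two hypotheses \eqref{eq:block-onehot} and \eqref{eq:item-once} as column and row sums of $P$. Then the only remaining task is to show each entry is $0$ or $1$, which will follow from nonnegativity.

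First, the sums. For each fixed column $j$, I will exchange the finite sums:
\[
\sum_{i=1}^n p_{i,j}=\sum_{i=1}^n\sum_{k=1}^K x_{i,j,k}=1 .
\]
This holds by \eqref{eq:block-onehot}. For each fixed row $i$, in the same way,
\[
\sum_{j=1}^n p_{i,j}=\sum_{j=1}^n\sum_{k=1}^K x_{i,j,k}=1 .
\]
This holds by \eqref{eq:item-once}. No reordering issues arise, since all sums are finite.

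Second, the entries. Each $p_{i,j}$ is a sum of $K$ binary variables, so $p_{i,j}$ is a nonnegative integer. Moreover, $p_{i,j}$ is one of the nonnegative integer summands in the column sum $\sum_{i'}p_{i',j}=1$. Hence $0\le p_{i,j}\le 1$, so $p_{i,j}\in\{0,1\}$.

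As a byproduct, $p_{i,j}\le 1$ also forces at most one $k$ with $x_{i,j,k}=1$. This shows that the color layers $X^{(k)}$ have disjoint supports, which is the colored-decomposition remark made before the lemma.

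Combining the two steps, $P$ is a $0/1$ matrix with exactly one $1$ in every row and every column, i.e.\ a permutation matrix. The only step needing care is the integrality argument giving $p_{i,j}\in\{0,1\}$: the row and column sums alone would not exclude fractional entries, so the binary nature of $x_{i,j,k}$ must be invoked explicitly. Apart from this, the proof is bookkeeping, and I expect no real obstacle.
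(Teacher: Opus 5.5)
Your proposal is correct and follows essentially the same route as the paper: regroup the block one-hot and item-once sums into column and row sums of $P$, then invoke the binary nature of the $x_{i,j,k}$ to conclude $p_{i,j}\in\{0,1\}$. The only cosmetic difference is that you obtain $p_{i,j}\in\{0,1\}$ by bounding a nonnegative integer summand by the column total, whereas the paper argues directly from the uniqueness of the active triple $(i,k)$ at each position $j$.
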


\begin{proof}
Fix a position $j$. By \eqref{eq:block-onehot},
\[
\sum_{i=1}^n \sum_{k=1}^K x_{i,j,k} \;=\;1.
\]
Regrouping by $i$,
\[
\sum_{i=1}^n \Bigl(\sum_{k=1}^K x_{i,j,k}\Bigr) \;=\; \sum_{i=1}^n p_{i,j} \;=\; 1,
\]
so column $j$ of $P$ has total $1$. Because all $x_{i,j,k}\in\{0,1\}$ and at a fixed $j$
there is exactly one triple $(i,k)$ with $x_{i,j,k}=1$, it follows that exactly one $i$
has $p_{i,j}=1$ and all other $p_{i,j'}=0$. Hence column $j$ is one-hot.

Similarly, fix an item $i$. By \eqref{eq:item-once},
\[
\sum_{j=1}^n \sum_{k=1}^K x_{i,j,k} \;=\; 1,
\]
so
\[
\sum_{j=1}^n p_{i,j} \;=\;1.
\]
Again, by integrality of $x_{i,j,k}$’s, exactly one column $j$ satisfies $p_{i,j}=1$.
Thus row $i$ is one-hot. Therefore $P$ is a permutation matrix.
\end{proof}

\begin{lemma}[Colored completion: any permutation can be lifted to $K$ layers]
\label{lem:colored}
Conversely, let $P$ be any permutation matrix on $[n]\times[n]$, and let
$k_1,\dots,k_n\in[K]$ be any choice of trucks for the $n$ positions.
Define $X=(x_{i,j,k})$ by
\[
x_{i,j,k}\;=\;
\begin{cases}
1,&\text{if } P_{i,j}=1 \text{ and } k=k_j,\\
0,&\text{otherwise.}
\end{cases}
\]
Then $X$ satisfies \eqref{eq:block-onehot} and \eqref{eq:item-once}, and its $K$ slices
sum to $P$ via \eqref{eq:sum-over-k}.
\end{lemma}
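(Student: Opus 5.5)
The plan is a direct verification: substitute the definition of $X$ into each of the three claims and use only that $P$ is a permutation matrix. The first step is to record the consequence of the construction. Since the condition $k=k_j$ fixes exactly one color per position, for every fixed pair $(i,j)$ we have
\[
\sum_{k=1}^K x_{i,j,k} \;=\; \sum_{k=1}^K \mathbf 1[P_{i,j}=1]\,\mathbf 1[k=k_j] \;=\; P_{i,j}.
\]
This single identity already gives the last claim: the $K$ slices sum to $P$ in the sense of \eqref{eq:sum-over-k}, i.e.\ $p_{i,j}=P_{i,j}$. It also shows the layers $X^{(k)}$ have disjoint supports, because each nonzero entry of $P$ receives exactly one color.

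The second step checks the two linear constraints by regrouping the double sums through this identity, in the same way as the proof of Lemma~\ref{lem:perm} but in reverse.
\begin{itemize}
\item For \eqref{eq:block-onehot}, fix $j$. Then $\sum_i\sum_k x_{i,j,k}=\sum_i P_{i,j}=1$, because column $j$ of a permutation matrix contains exactly one $1$.
\item For \eqref{eq:item-once}, fix $i$. Then $\sum_j\sum_k x_{i,j,k}=\sum_j P_{i,j}=1$, because row $i$ contains exactly one $1$.
\end{itemize}
Integrality of $X$ is immediate, since $X$ is defined by $0/1$ cases.

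No step is a genuine obstacle. The only point needing care is the bookkeeping: the color is indexed by position $j$ (through $k_j$), not by item $i$. So the sum over $k$ must be collapsed first, for fixed $(i,j)$, before summing over rows or columns. Once that order is respected, everything follows from the row and column sums of $P$.
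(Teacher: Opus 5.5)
Your proposal is correct and follows essentially the same direct verification as the paper. The only difference is the order: you first establish $\sum_k x_{i,j,k}=P_{i,j}$ and then read off both constraints from the column and row sums of $P$, whereas the paper exhibits the unique active variable ($x_{i_j,j,k_j}$ per position, $x_{i,j(i),k_{j(i)}}$ per item) directly and derives the slice-sum identity last.
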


\begin{proof}
Because $P$ has exactly one $1$ in each column $j$, there is a unique $i_j$ with $P_{i_j,j}=1$.
By construction, at position $j$ we set exactly one variable to $1$, namely
$x_{i_j,j,k_j}=1$, so \eqref{eq:block-onehot} holds.
Because $P$ has exactly one $1$ in each row $i$, there is a unique column $j(i)$ with $P_{i,j(i)}=1$,
hence exactly one variable $x_{i,j(i),k_{j(i)}}$ is $1$ among $\{x_{i,j,k}\}_{j,k}$, so
\eqref{eq:item-once} holds. Finally,
\[
p_{i,j}=\sum_k x_{i,j,k}
=
\begin{cases}
1,&\text{if }P_{i,j}=1,\\
0,&\text{otherwise,}
\end{cases}
\]
so $\sum_k X^{(k)}=P$.
\end{proof}

\begin{theorem}[Characterization of valid routes as colored permutations]
\label{thm:colored-perms}
A tensor \(X=(x_{i,j,k})\in\{0,1\}^{n\times n\times K}\) satisfies
\eqref{eq:block-onehot} and \eqref{eq:item-once}
if and only if the matrices \(X^{(k)}=(x_{i,j,k})_{i,j}\) have pairwise disjoint supports and
\[
P := \sum_{k=1}^K X^{(k)}
\]
is a permutation matrix on \([n]\times[n]\).
In that case, each \(X^{(k)}\) is automatically a partial permutation matrix.
\end{theorem}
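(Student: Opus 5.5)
The plan is to prove the two directions separately, using Lemma~\ref{lem:perm} for the forward direction. The converse is an elementary counting argument, close in spirit to Lemma~\ref{lem:colored} but without assuming the layers arise from a single truck label per position.

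\emph{Forward direction.} Assume $X$ satisfies \eqref{eq:block-onehot} and \eqref{eq:item-once}. Lemma~\ref{lem:perm} already gives that $P=\sum_k X^{(k)}$ is a permutation matrix. For disjointness of supports, I argue by contradiction. Suppose $x_{i,j,k}=x_{i,j,k'}=1$ for some $k\neq k'$. Then $\sum_{i',k''}x_{i',j,k''}\ge 2$, contradicting \eqref{eq:block-onehot} at position $j$. Equivalently, $p_{i,j}\in\{0,1\}$ forces at most one $k$ with $x_{i,j,k}=1$.

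\emph{Converse.} Assume the supports of the $X^{(k)}$ are pairwise disjoint and $P$ is a permutation matrix. Since $X$ is $0/1$, we have $p_{i,j}=\sum_k x_{i,j,k}$, and we can regroup the double sums. Column $j$ gives
\[
\sum_{i}\sum_k x_{i,j,k}=\sum_i p_{i,j}=1,
\]
which is \eqref{eq:block-onehot}. Row $i$ gives
\[
\sum_j\sum_k x_{i,j,k}=\sum_j p_{i,j}=1,
\]
which is \eqref{eq:item-once}.

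The converse actually needs only that $P$ is a permutation matrix; disjointness is then automatic, since nonnegative integer entries summing to $p_{i,j}\le 1$ allow at most one nonzero term. I will note this in a remark rather than rely on it.

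\emph{Partial-permutation claim.} Entrywise we have $0\le X^{(k)}\le P$ for every $k$, because all layers are nonnegative and sum to $P$. Hence each row and each column of $X^{(k)}$ has sum at most the corresponding row or column sum of $P$, which is $1$. A $0/1$ matrix with all row and column sums at most $1$ is by definition a partial permutation matrix.

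There is no real obstacle in this argument. The only point requiring care is to state explicitly that binarity of the $x_{i,j,k}$ is what turns the sum bounds into one-hot statements.
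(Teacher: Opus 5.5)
Your proof is correct and follows the paper's argument essentially step for step: Lemma~\ref{lem:perm} plus block one-hotness for the forward direction, and the entrywise bound $X^{(k)}\le P$ for the partial-permutation claim. The only difference is in the converse. There you obtain \eqref{eq:block-onehot} and \eqref{eq:item-once} by directly regrouping the column and row sums of $P$, whereas the paper first uses disjointness to locate the unique active pair $(i,k)$. Your remark that disjointness is already implied by $p_{i,j}\le 1$, since $P$ is the integer sum of $0/1$ layers, is also correct.
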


\begin{proof}
\((\Rightarrow)\) By Lemma~\ref{lem:perm}, the matrix
\(
P=\sum_k X^{(k)}
\)
is a permutation matrix. Since every block \(j\) is one-hot, there is exactly one active pair \((i,k)\) at each column \(j\), so the supports of the \(X^{(k)}\) are pairwise disjoint.

\((\Leftarrow)\) Suppose the supports of the \(X^{(k)}\) are pairwise disjoint and
\(
P=\sum_k X^{(k)}
\)
is a permutation matrix. Fix a column \(j\). Since \(P\) is a permutation matrix, exactly one row \(i\) has \(P_{i,j}=1\). Because the supports of the \(X^{(k)}\) are disjoint and their sum equals \(P\), there is exactly one vehicle label \(k\) for which \(x_{i,j,k}=1\), and all other entries in column \(j\) vanish. Hence
\[
\sum_{i=1}^n\sum_{k=1}^K x_{i,j,k}=1,
\]
so \eqref{eq:block-onehot} holds. The same argument applied row-wise shows that for each customer \(i\),
\[
\sum_{j=1}^n\sum_{k=1}^K x_{i,j,k}=1,
\]
so \eqref{eq:item-once} holds. Since each \(X^{(k)}\) is supported inside a permutation matrix, it has at most one \(1\) in each row and column, hence each \(X^{(k)}\) is a partial permutation matrix.
\end{proof}

Figure~\ref{fig:colored-decomposition} gives an explicit example of such a
colored decomposition, where a permutation matrix is written as the sum of two disjoint partial permutation layers.

\subsection{Constraint Enhanced Quantum Vehicle Routing Optimization}
Denote the rank-$1$ projector for symbol $(i,k)$ on block $j$ by
$X^{(j)}_{i,k}:=\ket{(i,k)}\!\bra{(i,k)}$ (acting as identity elsewhere). One-hot per block is enforced by the mixer/encoder, so we penalize only the global uniqueness equality. The constraint that \emph{each customer is visited exactly once} becomes
\begin{align}
\label{eq:cvrp-A-assign}
H_{\mathrm{once}}^{\textsc{A}}
&:=\ \lambda_{\mathrm{once}} \sum_{i=1}^{n}
\Bigl(\ \sum_{j=1}^{n}\sum_{k=1}^{K} X^{(j)}_{i,k} \;-\; 1\ \Bigr)^{2}.
\end{align}

\begin{proposition}[Capacity enforcement requires \emph{no} ancilla qubits]
\label{prop:no-ancilla-capacity}
Under the global-position one-hot encoding with decision projectors $X^{(j)}_{i,k}$ on the
native register, define the operator-valued load of vehicle $k$ by
\[
\widehat D_k \;:=\; \sum_{j=1}^{n}\sum_{i=1}^{n} d_i\,X^{(j)}_{i,k}.
\]
The hinge-square capacity penalty
\begin{equation}
\label{eq:cvrp-A-cap}
H_{\mathrm{cap}}^{\textsc{A}}
\;:=\;
\lambda_{\mathrm{cap}}\sum_{k=1}^{K}
\Bigl(\bigl[\widehat D_k - Q_k\bigr]_+\Bigr)^{2}
\end{equation}
is diagonal in the computational basis and acts only on the decision qubits.
Therefore, enforcing capacities \emph{adds zero logical ancilla qubits}.
\end{proposition}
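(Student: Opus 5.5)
The plan is to show that \(H_{\mathrm{cap}}^{\textsc{A}}\) is a function of commuting diagonal projectors on the decision register. That makes it diagonal, and it also shows that its action never leaves the native qubits. The argument runs through the spectral (functional) calculus for commuting Hermitian operators.

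\textbf{Step 1: the building blocks.} Each \(X^{(j)}_{i,k}=\ket{(i,k)}\!\bra{(i,k)}\) acts on block \(j\) tensored with the identity on the other blocks. In the one-hot encoding this projector is the single-qubit number operator \(n_{(i,k)}^{(j)}=(I-Z)/2\) on the wire labelled \((i,k)\) in block \(j\). It is therefore diagonal in the computational basis, and it lives entirely on the \(n^2K\) decision qubits. Any two such operators commute, since they are all diagonal in the same basis.

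\textbf{Step 2: the load operator.} The operator \(\widehat D_k=\sum_{j,i} d_i X^{(j)}_{i,k}\) is a real linear combination of these projectors. Hence it is Hermitian, diagonal, and acts on a computational basis state \(\ket{x}\) by
\[
\widehat D_k\ket{x}=D_k(x)\ket{x}, \qquad D_k(x)=\sum_{j,i} d_i\,x_{i,j,k}.
\]
Here \(D_k(x)\) is exactly the classical load of vehicle \(k\) in the configuration \(x\).

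\textbf{Step 3: applying the hinge-square.} Let \(f_k(t)=([t-Q_k]_+)^2\), a real function of one variable. Define \(f_k(\widehat D_k)\) by the functional calculus. Because \(\widehat D_k\) is diagonal in the computational basis, this gives
\[
f_k(\widehat D_k)=\sum_x f_k\bigl(D_k(x)\bigr)\ket{x}\!\bra{x},
\]
which is again diagonal. Summing over \(k\) and multiplying by \(\lambda_{\mathrm{cap}}\) yields
\[
H_{\mathrm{cap}}^{\textsc{A}}=\sum_x c(x)\ket{x}\!\bra{x}, \qquad c(x)=\lambda_{\mathrm{cap}}\sum_k f_k\bigl(D_k(x)\bigr).
\]
So \(H_{\mathrm{cap}}^{\textsc{A}}\) is diagonal in the computational basis and supported only on the decision qubits.

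\textbf{Step 4: implementation without ancillae.} Any diagonal Hermitian operator on \(N\) qubits admits a Walsh (Pauli-\(Z\)) expansion
\[
\sum_{S\subseteq[N]} a_S \prod_{q\in S} Z_q .
\]
Consequently \(e^{-i\gamma H_{\mathrm{cap}}^{\textsc{A}}}\) is a product of commuting multi-\(Z\) rotations on the decision wires alone. No register is needed to store \(\widehat D_k\) or a slack variable, so the number of logical ancilla qubits added is zero.

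\textbf{Main obstacle.} The delicate step is the meaning of the nonpolynomial map \([\cdot]_+\) applied to an operator, since the hinge is not a polynomial in the projectors. I would resolve this through the diagonal functional calculus of Step 3. I would also note that, because \(\widehat D_k\) has finitely many eigenvalues, \(f_k\) agrees on that spectrum with an interpolating polynomial of bounded degree. This gives an explicit Pauli-\(Z\) expansion, at the cost of gate count but not qubit count. The claim concerns only qubit overhead, so I would remark that circuit depth is not bounded by this argument.
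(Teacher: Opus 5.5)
Your proposal is correct and follows the paper's proof: the projectors are diagonal and mutually commuting, so $\widehat D_k$ is diagonal with eigenvalue $D_k(x)$ on each basis state, and the hinge-square then acts entrywise, so $H_{\mathrm{cap}}^{\textsc{A}}$ is diagonal and supported on the decision qubits. Your Walsh-expansion step slightly sharpens the paper's remark that synthesis scratch qubits can be uncomputed, since multi-$Z$ rotations need none; your identification of the rank-one block projector with $(I-Z)/2$ on a single wire is exact only on the one-hot sector, but the paper's eigenvalue formula uses the same reading implicitly, and diagonality holds under either interpretation.
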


\begin{proof}
Each $X^{(j)}_{i,k}$ is a rank-$1$ projector onto the basis state where global position $j$
is occupied by customer $i$ on vehicle $k$. In particular, each $X^{(j)}_{i,k}$ is diagonal in
the computational basis, and all such projectors commute.
Hence \(\widehat D_k=\sum_{j,i} d_i X^{(j)}_{i,k}\) is also diagonal.

For any computational basis state \(|x\rangle\) encoding decision bits
\(\{x_{i,j,k}\}\in\{0,1\}\), the operator \(\widehat D_k\) has eigenvalue
\[
D_k(x) \;=\; \sum_{j=1}^{n}\sum_{i=1}^{n} d_i\,x_{i,j,k},
\]
so the capacity penalty acts entrywise as
\[
H_{\mathrm{cap}}^{\textsc{A}}\,|x\rangle
\;=\;
\lambda_{\mathrm{cap}}\sum_{k=1}^{K}\max\{0,\,D_k(x)-Q_k\}^{2}\;|x\rangle.
\]
Therefore \(H_{\mathrm{cap}}^{\textsc{A}}\) is diagonal and depends only on the decision variables.

Since \(H_{\mathrm{cap}}^{\textsc{A}}\) is diagonal on, and supported entirely within, the native decision
register, the phase separator \(e^{-i\gamma H_{\mathrm{cap}}^{\textsc{A}}}\) acts on the decision qubits alone.
Thus capacity enforcement introduces no additional logical ancilla qubits. Any scratch qubits used by a particular synthesis strategy can be uncomputed and are not part of the encoding.
\end{proof}

This is a statement about the logical encoding. A particular gate-level
synthesis of the diagonal phase separator may use temporary workspace or
introduce high-order Pauli decompositions, but such scratch qubits are not
additional problem variables and can be uncomputed.

\begin{remark}[Quadratic surrogate in the balanced/uniform-capacity regime]
\label{rem:cap-quadratic-surrogate}
If $Q_k\equiv Q$ and routes are expected to be approximately balanced (so $D_k\approx Q$),
one may use the smooth quadratic deviation penalty
\begin{equation}
\label{eq:cap-soft-surrogate}
\widetilde H_{\mathrm{cap}}
\;:=\;
\lambda_{\mathrm{cap}}\sum_{k=1}^{K}\bigl(\widehat D_k - Q\bigr)^2,
\end{equation}
which is also diagonal and ancilla-free.
Unlike the hinge-square, $\widetilde H_{\mathrm{cap}}$ penalizes \emph{both} overload and underload,
so it behaves as a load-balancing regularizer rather than a pure “no-overload” soft constraint.
In particular, on the manifold where every customer is assigned exactly once, the total demand
$\sum_k \widehat D_k$ is fixed (a constant times the identity), so \eqref{eq:cap-soft-surrogate}
differs from $\lambda_{\mathrm{cap}}\sum_k \widehat D_k^2$ only by an additive constant and a
(global) constant shift in energy. In this balanced regime, \eqref{eq:cap-soft-surrogate} often
serves as a convenient diagonal surrogate, while \eqref{eq:cvrp-A-cap} remains the
faithful soft-penalty formulation for strict capacity feasibility ($D_k\le Q_k$).
\end{remark}

In CVRP instances with heterogeneous demands \(\{d_i\}\), capacity expressions such as
\(\widehat D_k=\sum_{j,i} d_i\,X^{(j)}_{i,k}\) are diagonal but carry instance-specific
coefficients and therefore need not be invariant under the full global customer
relabeling \(S_n\) appearing in Def.~\ref{def:kernel-requirement}(b).
Accordingly, in the present construction we take \(H_{\mathrm{pen}}\) to include only the purely combinatorial, label-symmetric
feasibility terms, such as the global uniqueness constraint \(H_{\mathrm{once}}^{\textsc{A}}\) and any optional one-hot/forbid terms that are independent of \(\{d_i\}\) and \(\{Q_k\}\). Capacity is then enforced \emph{outside} the symmetric penalty sector, either by classical feasibility filtering at decode time (rejecting samples with
\(D_k(x)>Q_k\)) or by adding an ancilla-free diagonal bias term involving \(\widehat D_k\) to
the routing Hamiltonian (for example \(\widetilde H_{\mathrm{cap}}\) from
Rem.~\ref{rem:cap-quadratic-surrogate}, or the hinge-square penalty from Prop.~\ref{prop:no-ancilla-capacity}). 

\paragraph{Objective (travel cost on the global timeline).}
Define the edge-cost between consecutive global positions by
\[
  W_{(i,k)\to(i',k')}
  \;:=\;
  \begin{cases}
    w_{i,i'} & \text{if } k=k',\\[2pt]
    w_{i,\mathtt{dep}_k} + w_{\mathtt{dep}_{k'},\,i'} & \text{if } k\neq k'.
  \end{cases}
\]
Then
\begin{align}
\label{eq:cvrp-A-obj}
H_{\mathrm{obj}}^{\textsc{A}}
\;:=\;
\lambda_{\mathrm{obj}}\Biggl[
&\sum_{j=1}^{n-1}\sum_{i,i'=1}^{n}\sum_{k,k'=1}^{K}
  W_{(i,k)\to(i',k')} \; X^{(j)}_{i,k}\,X^{(j+1)}_{i',k'}\\[-1pt]
&\quad + \sum_{i=1}^{n}\sum_{k=1}^{K} w_{\mathtt{dep}_k,\,i}\; X^{(1)}_{i,k}
\quad + \sum_{i=1}^{n}\sum_{k=1}^{K} w_{i,\,\mathtt{dep}_k}\; X^{(n)}_{i,k}
\Biggr].\nonumber
\end{align}
Within-truck adjacencies pay $w_{i,i'}$; cross-truck breaks pay depot-close $+$ depot-start; endpoints pay depot edges. The total diagonal cost Hamiltonian becomes 

\begin{equation}
\label{eq:cvrp-A-total}
H_C^{\textsc{A}} \;:=\; H_{\mathrm{once}}^{\textsc{A}} + H_{\mathrm{cap}}^{\textsc{A}} + H_{\mathrm{obj}}^{\textsc{A}}.
\end{equation}

The CE--QAOA stack is
$\ket{\psi_p(\vec\gamma,\vec\beta)}
=\bigl(\prod_{\ell=1}^p U_M(\beta_\ell)\,e^{-i\gamma_\ell H_C^{\textsc{A}}}\bigr)\ket{s_0}$.
Because $H_C^{\textsc{A}}$ couples blocks (assignment, timeline edges), $U_C(\gamma)$
is entangling. A measured bitstring selects, for each $j$, exactly one $(i,k)$. The subsequence of positions colored by $k$ yields truck $k$’s route; depot edges insert automatically at route boundaries via \eqref{eq:cvrp-A-obj}.
Route lengths emerge from the count of positions assigned to each $k$.  Under \eqref{eq:block-onehot}–\eqref{eq:item-once}, the marginals
$X_{i,j}:=\sum_{k=1}^{K}x_{i,j,k}$ form a permutation matrix on $[n]\times[n]$
(“colored permutation” view), while the $k$-index partitions items across vehicles.


\subsection{Binary compression of local symbol blocks}
\label{subsec:binary-compression}

Let
\[
S := nK,
\qquad
q := \lceil \log_2 S \rceil,
\qquad
\mathcal V := \{0,1,\dots,S-1\}.
\]
Each global-position block carries one composite symbol \((i,k)\in[n]\times[K]\).
Fix the local symbol index map
\begin{equation}
\label{eq:symbol-index-map}
\nu(i,k) := (k-1)n + (i-1)\in\mathcal V
\qquad\text{(1-based indexing).}
\end{equation}
Equivalently, in zero-based implementation indexing one writes
\[
s = i + nk,
\qquad i\in\{0,\dots,n-1\},\quad k\in\{0,\dots,K-1\}.
\]

Let \(\mathrm{bin}_q(s)\in\{0,1\}^q\) denote the \(q\)-bit binary encoding of \(s\).
The blockwise compression is the isometry
\begin{equation}
\label{eq:block-compression-isometry}
\mathcal C_{\mathrm{blk}}
:=
\sum_{s=0}^{S-1}
\ket{\mathrm{bin}_q(s)}\!\bra{e_s},
\end{equation}
from the one-hot symbol space \(\mathrm{span}\{\ket{e_0},\dots,\ket{e_{S-1}}\}\)
into the valid binary code space
\[
\mathcal K_{\mathrm{val}}
:=
\mathrm{span}\{\ket{\mathrm{bin}_q(s)}: 0\le s\le S-1\}
\subseteq (\mathbb C^2)^{\otimes q}.
\]
For the full \(n\)-block register we define
\[
\mathcal C := \mathcal C_{\mathrm{blk}}^{\otimes n}.
\]

Classically, this induces a direct blockwise compression of one-hot bitstrings into binary strings and an inverse translation back to one-hot form. The latter allows the existing feasibility oracle Alg.~\ref{alg:feasible-global-positions} to be reused without modification. Thus the binary layer changes the execution register, but not the semantics of the routing decoder or the feasibility criteria used by the hybrid pipeline. In particular, in Algorithms~\ref{alg:compress-onehot-to-binary} and \ref{alg:decode-binary-to-onehot} we show that the binary register can be treated as a compressed execution layer on top of the original one-hot routing formulation. The quantum device samples in the binary register, the inverse translation lifts the sample back to the one-hot tensor representation, and the current feasibility oracle
Alg.~\ref{alg:feasible-global-positions} remains the final certification backend.

\begin{algorithm}[H]
\caption{\textsc{CompressOneHotToBinaryBlocks}}
\label{alg:compress-onehot-to-binary}
\begin{algorithmic}[1]
\Require One-hot bitstring \(x\in\{0,1\}^{nS}\), with \(S=nK\) and \(q=\lceil\log_2 S\rceil\).
\Ensure Binary-compressed bitstring \(y\in\{0,1\}^{nq}\), or \textbf{fail} if a block is not one-hot.
\State initialize \(y\gets 0^{nq}\)
\For{\(j\gets 0\) \textbf{to} \(n-1\)}
    \State \(L \gets jS\)
    \State \(\texttt{ones}\gets 0,\quad s^\star\gets -1\)
    \For{\(s\gets 0\) \textbf{to} \(S-1\)}
        \If{\(x[L+s]=1\)}
            \State \(\texttt{ones}\gets \texttt{ones}+1,\quad s^\star\gets s\)
        \EndIf
    \EndFor
    \If{\(\texttt{ones}\neq 1\)}
        \State \Return \textbf{fail}
    \EndIf
    \State write the \(q\)-bit word \(\mathrm{bin}_q(s^\star)\) into positions
    \(jq,\dots,(j+1)q-1\) of \(y\)
\EndFor
\State \Return \(y\)
\end{algorithmic}
\end{algorithm}

\begin{algorithm}[H]
\caption{\textsc{DecodeBinaryToOneHotAndCheck}}
\label{alg:decode-binary-to-onehot}
\begin{algorithmic}[1]
\Require Binary sample \(y\in\{0,1\}^{nq}\), with \(S=nK\), \(q=\lceil\log_2 S\rceil\);
demands \(d[0{:}n{-}1]\); vehicle capacities \(Q[0{:}K{-}1]\); integers \(n,K\).
\Ensure \textbf{true} iff the decoded sample is feasible in the original one-hot routing model.
\State initialize \(x\gets 0^{nS}\)
\For{\(j\gets 0\) \textbf{to} \(n-1\)}
    \State read block \(y_j := y[jq:(j+1)q-1]\)
    \State \(s_j \gets \mathrm{int}(y_j)\) \Comment{decode binary block to an integer}
    \If{\(s_j \ge S\)}
        \State \Return \textbf{false} \Comment{padding leakage / invalid binary code}
    \EndIf
    \State set \(x[jS + s_j] \gets 1\)
\EndFor
\State \Return \Call{FeasibleGlobalPositions}{$x,d,Q,n,K$}
\end{algorithmic}
\end{algorithm}


\subsection{Hamiltonian reformulation on compressed binary blocks}
\label{subsec:binary-hamiltonian}

The valid binary code space of a single compressed block is
\[
\mathcal K_{\mathrm{val}}
:=
\mathrm{span}\{\ket{\mathrm{bin}_q(s)}: 0\le s\le S-1\}
\subseteq (\mathbb C^2)^{\otimes q}.
\]

The one-hot rank-one symbol projector
\[
X^{(j)}_{i,k}=\ket{(i,k)}\!\bra{(i,k)}
\]
is replaced on the compressed block by the binary computational-basis projector
\begin{equation}
\label{eq:binary-local-projector}
\Pi^{(j)}_{i,k}
:=
\ket{\mathrm{bin}_q(\nu(i,k))}\!\bra{\mathrm{bin}_q(\nu(i,k))}.
\end{equation}
If the binary digits of \(\nu(i,k)\) are
\[
\nu(i,k)=\sum_{r=0}^{q-1} 2^r\,b_r(i,k),
\qquad b_r(i,k)\in\{0,1\},
\]
then \(\Pi^{(j)}_{i,k}\) admits the Pauli-\(Z\) factorization
\begin{equation}
\label{eq:binary-projector-pauli}
\Pi^{(j)}_{i,k}
=
\bigotimes_{r=0}^{q-1}
\frac{I + (-1)^{\,b_r(i,k)} Z^{(j)}_r}{2}.
\end{equation}
Thus every routing term remains diagonal after compression.

\paragraph{Global uniqueness.}
The customer-once penalty becomes
\begin{equation}
\label{eq:binary-once-penalty}
H_{\mathrm{once}}^{\mathrm{bin}}
:=
\lambda_{\mathrm{once}}
\sum_{i=1}^{n}
\Bigl(
\sum_{j=1}^{n}\sum_{k=1}^{K} \Pi^{(j)}_{i,k} - 1
\Bigr)^2.
\end{equation}

\paragraph{Capacity load operators.}
The binary load observable of vehicle \(k\) is
\begin{equation}
\label{eq:binary-load-operator}
\widehat D_k^{\mathrm{bin}}
:=
\sum_{j=1}^{n}\sum_{i=1}^{n} d_i\,\Pi^{(j)}_{i,k},
\end{equation}
so the hinge-square capacity penalty becomes
\begin{equation}
\label{eq:binary-capacity-penalty}
H_{\mathrm{cap}}^{\mathrm{bin}}
:=
\lambda_{\mathrm{cap}}
\sum_{k=1}^{K}
\Bigl(
\bigl[\widehat D_k^{\mathrm{bin}} - Q_k\bigr]_+
\Bigr)^2.
\end{equation}
As in the one-hot formulation, this term is diagonal and ancilla-free.

\paragraph{Routing objective.}
The diagonal travel-cost Hamiltonian on the compressed register is
\begin{align}
\label{eq:binary-routing-objective}
H_{\mathrm{obj}}^{\mathrm{bin}}
:=
\lambda_{\mathrm{obj}}
\Biggl[
&\sum_{j=1}^{n-1}\sum_{i,i'=1}^{n}\sum_{k,k'=1}^{K}
W_{(i,k)\to(i',k')}
\Pi^{(j)}_{i,k}\Pi^{(j+1)}_{i',k'}
\nonumber\\
&\qquad\qquad
+
\sum_{i=1}^{n}\sum_{k=1}^{K}
w_{\mathtt{dep}_k,i}\,\Pi^{(1)}_{i,k}
+
\sum_{i=1}^{n}\sum_{k=1}^{K}
w_{i,\mathtt{dep}_k}\,\Pi^{(n)}_{i,k}
\Biggr].
\end{align}

If \(S\) is not a power of two, the binary block contains invalid padded words
\[
\mathcal V_{\mathrm{pad}}:=\{S,S+1,\dots,2^q-1\}.
\]
To suppress leakage into these words, define
\begin{equation}
\label{eq:padding-projector}
P_{\mathrm{pad}}^{(j)}
:=
\sum_{u=S}^{2^q-1} \ket{u}\!\bra{u}_j,
\qquad
P_{\mathrm{val}}^{(j)}
:=
\sum_{u=0}^{S-1} \ket{u}\!\bra{u}_j,
\end{equation}
and add the diagonal padding term
\begin{equation}
\label{eq:padding-penalty}
H_{\mathrm{pad}}
:=
\lambda_{\mathrm{pad}}
\sum_{j=1}^{n} P_{\mathrm{pad}}^{(j)}.
\end{equation}

The full diagonal binary routing Hamiltonian is therefore
\begin{equation}
\label{eq:binary-total-H}
H_C^{\mathrm{bin}}
=
H_{\mathrm{once}}^{\mathrm{bin}}
+
H_{\mathrm{cap}}^{\mathrm{bin}}
+
H_{\mathrm{obj}}^{\mathrm{bin}}
+
H_{\mathrm{pad}}.
\end{equation}

The compressed Hamiltonian \(H_C^{\mathrm{bin}}\) preserves the diagonal routing semantics of the one-hot construction. Every valid binary computational-basis state encodes one
composite symbol \((i,k)\) per global position, and under the inverse translation introduced in Alg.~\ref{alg:decode-binary-to-onehot} its decoded one-hot representative
is scored by the same routing objective and checked by the same feasibility criteria as in the original formulation.


\subsubsection{Compressed mixers: exact image and practical surrogate}
\label{subsec:binary-mixer}

The problem Hamiltonian compresses exactly because it is diagonal in the symbol basis.
The more delicate step is the reformulation of the block mixer. Let the one-hot block mixer on a local alphabet of size \(S=nK\) be
\begin{equation}
\label{eq:onehot-block-mixer-recall}
\widetilde H_{XY}^{(j)}
=
\frac{1}{S-1}
\sum_{0\le u<v\le S-1}
\Bigl(
\ket{e_u}\!\bra{e_v} + \ket{e_v}\!\bra{e_u}
\Bigr)_j.
\end{equation}
Let
\begin{equation}
\label{eq:block-compression-map}
\mathcal C_{\mathrm{blk}}
:=
\sum_{s=0}^{S-1}
\ket{\mathrm{bin}_q(s)}\!\bra{e_s}
\end{equation}
be the blockwise compression isometry from the one-hot symbol space into the valid binary
code space \(\mathcal K_{\mathrm{val}}\). The exact binary image of the one-hot mixer is then
\begin{equation}
\label{eq:exact-binary-mixer}
\widetilde H_{M,\mathrm{exact}}^{(j)}
:=
\mathcal C_{\mathrm{blk}}\,
\widetilde H_{XY}^{(j)}\,
\mathcal C_{\mathrm{blk}}^\dagger
=
\frac{1}{S-1}
\sum_{0\le u<v\le S-1}
\Bigl(
\ket{u}\!\bra{v} + \ket{v}\!\bra{u}
\Bigr)_j,
\end{equation}
where \(\ket{u}\) and \(\ket{v}\) are understood as \(q\)-qubit binary basis states.
By construction,
\[
\widetilde H_{M,\mathrm{exact}}^{(j)}
=
P_{\mathrm{val}}^{(j)}
\widetilde H_{M,\mathrm{exact}}^{(j)}
P_{\mathrm{val}}^{(j)},
\]
so the valid code space is invariant under the exact compressed mixer. The corresponding exact binary mixer unitary is
\begin{equation}
\label{eq:exact-binary-unitary}
U_{M,\mathrm{exact}}^{\mathrm{bin}}(\beta)
=
\bigotimes_{j=1}^{n}
\exp\!\bigl(-i\beta\,\widetilde H_{M,\mathrm{exact}}^{(j)}\bigr).
\end{equation}
On the valid binary subspace, this dynamics is unitarily equivalent to the original one-hot block-XY mixing.

\paragraph{Compressed CE--QAOA layer.}
With either mixer choice,
\[
U_M^{\mathrm{bin}}(\beta)
\in
\Bigl\{
U_{M,\mathrm{exact}}^{\mathrm{bin}}(\beta),
\;
U_{M,\mathrm{sur}}^{\mathrm{bin}}(\beta)
\Bigr\},
\]
the compressed routing ansatz is
\begin{equation}
\label{eq:binary-ansatz}
\ket{\psi_p^{\mathrm{bin}}(\vec\gamma,\vec\beta)}
=
\Bigl(
\prod_{\ell=1}^{p}
U_M^{\mathrm{bin}}(\beta_\ell)\,
e^{-i\gamma_\ell H_C^{\mathrm{bin}}}
\Bigr)
\ket{s_0^{\mathrm{bin}}},
\end{equation}
where \(\ket{s_0^{\mathrm{bin}}}\) is a chosen initial state supported on the valid binary
code space, for example the uniform valid-code superposition
\[
\ket{s_{\mathrm{blk}}^{\mathrm{bin}}}
=
\frac{1}{\sqrt{S}}
\sum_{s=0}^{S-1}\ket{\mathrm{bin}_q(s)},
\qquad
\ket{s_0^{\mathrm{bin}}}
=
\bigl(\ket{s_{\mathrm{blk}}^{\mathrm{bin}}}\bigr)^{\otimes n}.
\]

After measurement, a binary sample \(y\in\{0,1\}^{nq}\) is translated blockwise back to a
one-hot bitstring \(x\in\{0,1\}^{nS}\) by Alg.~\ref{alg:decode-binary-to-onehot}. The final
accept/reject decision is then delegated unchanged to
\(\textsc{FeasibleGlobalPositions}(x)\).

\subsection{Optimality and performance guarantees}
\label{sec:beyond-classical}

The performance claims of the present routing solver are inherited from the
\emph{CE-QAOA} framework developed in Refs.~\cite{onahce,onahfund,onahfinite}.
Since the CVRP and PDP formulations are realized as specializations of the
Onah--Firt--Michielsen kernel, the same finite-depth, finite-shot, and encoded-manifold guarantees become available under the same instance-dependent conditions. In particular, our global-position construction satisfies the kernel requirements once the capacity-penalization terms are folded into the objective. Thus, the state lives in a block one-hot manifold with $m=n$ blocks and local dimension $n_{\mathrm{loc}}=nK$ for CVRP and $n_{\mathrm{loc}}=TK$ for PDP. The normalized block-XY mixer preserves the encoded manifold with operator norm $O(1)$ on each block. Hence the routing solver inherits the same design-level performance statements proved for CE--QAOA on the encoded space.

The first consequence is the 1-design baseline. Under the block
permutation twirl, one-layer CE--QAOA forms an exact unitary $1$-design on the manifold $\OH$, so that
\[
\mathbb E_U\!\left[\left|\braket{x}{U|\phi}\right|^2\right]=\frac{1}{D},
\qquad D=(nK)^n
\]
for CVRP, and analogously with $D=(TK)^T$ for PDP. At shallow depth, the
interleaving of block-XY evolution with diagonal entanglers yields the
approximate second-moment behavior established in Ref.~\cite{onahce}. This
gives encoded anticoncentration on the native constrained manifold rather than
on the full ambient cube as shown in Fig. \ref{fig:vrp-anticoncentration}.

The second consequence is a direct optimal-solution sampling guarantee in the
Fej\'er-filtered regime. For completeness, we recall the analytic mechanism
underlying the inherited finite-depth and finite-shot success bounds in
Appendix~\ref{app:fejer-mechanism} and
Theorem~\ref{thm:app-fejer-success}. In particular, we summarize the
dephased Fej\'er-filter reference model, define the mixer envelope $W_p$, the
wrapped phase map $\theta$, the optimal-set weight $C_\beta$, and the
off-peak quantity $M_p(\delta)$, and restate the resulting dimension-free
lower bound on the probability of sampling an optimal routing configuration. Whenever the routing
instance has nonzero envelope weight on the optimal set and a positive wrapped
phase separation from competing levels, finite-depth CE--QAOA assigns a
dimension-free lower bound to the probability of sampling an optimum, as
proved in Ref.~\cite{onahfinite}.

\begin{figure}[htbp]
  \centering
  \begin{subfigure}[t]{0.32\linewidth}
    \centering
    \includegraphics[width=\linewidth]{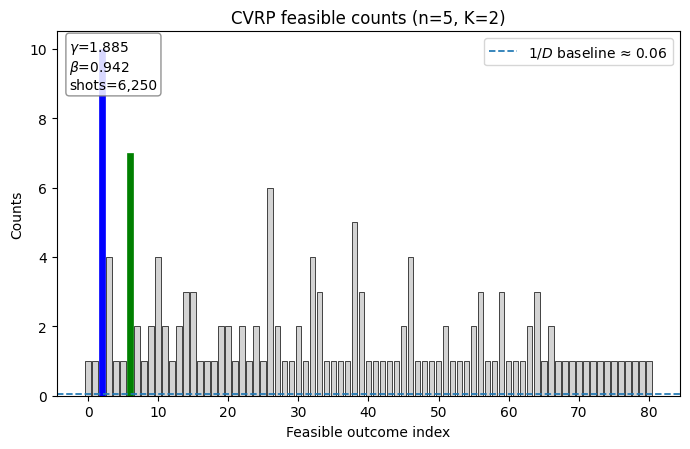}
    \caption{Peak probabilities well above $1/D$. Shots  $\propto n^{3}$.}
  \end{subfigure}\hfill
  \begin{subfigure}[t]{0.32\linewidth}
    \centering
    \includegraphics[width=\linewidth]{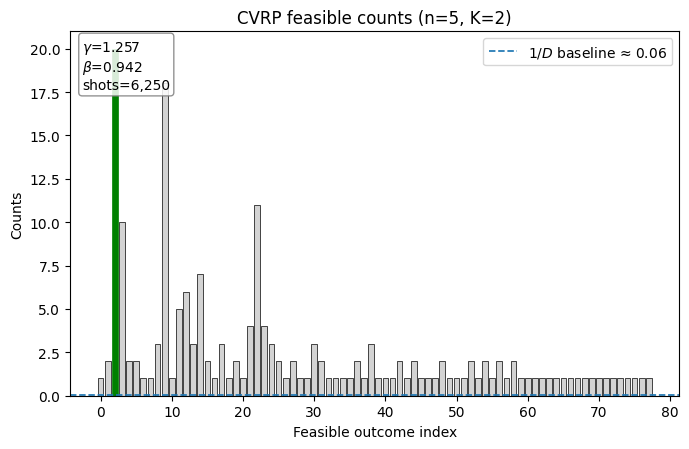}
    \caption{Encoded anticoncentration: several feasible strings $\gg 1/D$.}
  \end{subfigure}\hfill
  \begin{subfigure}[t]{0.32\linewidth}
    \centering
    \includegraphics[width=\linewidth]{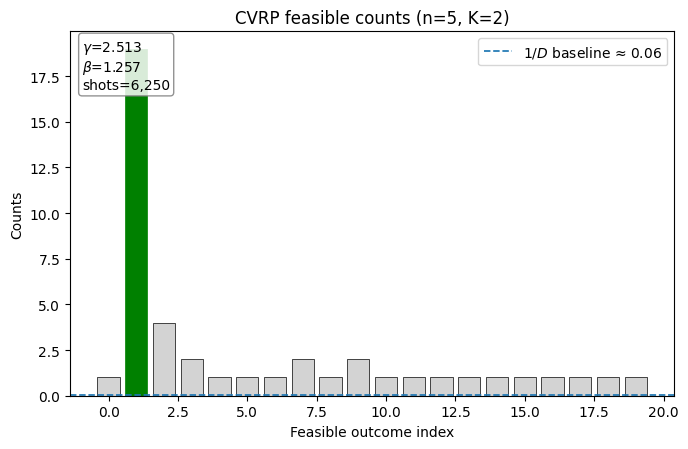}
    \caption{Best-by-cost (green) recovered under $S=\Theta(n^{3})$ sampling.}
  \end{subfigure}

  \caption{\textbf{Encoded anticoncentration and design baseline.}
  Each panel shows the empirical histogram of feasible outcomes from a depth-$p=1$
  CE--QAOA run on a CVRP instance, using shots $S=\Theta(n^{3})$ per grid point. The dashed line is the encoded design baseline $1/D$ with $D=(nK)^{n}$. Peaks far above $1/D$ and the successful identification of the optimum support the anticoncentration guarantees.}
  \label{fig:vrp-anticoncentration}
\end{figure}

\section{A hybrid quantum--classical routing algorithm}
\label{subsec:hybrid-routing}

\subsection{Constraint handling in the hybrid pipeline.}

The native constraints \eqref{eq:block-onehot}--\eqref{eq:item-once}, together with the timeline objective \eqref{eq:cvrp-A-obj}, do not by themselves force the positions assigned to a fixed vehicle \(k\) to form a single contiguous interval on the global timeline. Hence the encoded basis set contains all standard single-route CVRP tours as a subset, but it may also contain segmented same-vehicle timelines unless one adds an explicit contiguity constraint or imposes a decode-time post-selection rule. Consequently, we interpret the global-position formulation as a CE--QAOA-compatible routing model whose standard CVRP sector is selected operationally by feasibility filtering algorithm (Alg. \ref{alg:feasible-global-positions}) introduced next.  Thus, the complete routing pipeline studied in this work is naturally hybrid.  The quantum stage generates candidate bitstrings by sampling the CE--QAOA circuit over a coarse parameter grid, while the classical stage performs exact admissibility checks and exact scoring on the sampled outputs.

For the global-position colored-permutation encoding of CVRP, the key classical routine is a feasibility oracle that filters raw CE--QAOA samples before exact objective evaluation. Given a measured bitstring \(b\in\{0,1\}^{n^2K}\), the oracle checks four conditions: (i) one-hot occupancy in every global-position block, (ii) global uniqueness of each customer, (iii) vehicle-capacity compliance, and (iv) route contiguity on the global timeline. 

\begin{proposition}[Polynomial-time feasibility certification]
\label{prop:feasible-polytime}
Let \(b\in\{0,1\}^{Q_{\mathrm{bits}}}\) with \(Q_{\mathrm{bits}}=n^2K\), let
\(d[0{:}n{-}1]\) be the demand vector, and let \(Q[0{:}K{-}1]\) be the vehicle-capacity vector.
Define
\[
L
\;:=\;
\max\Bigl\{
\max_{0\le i\le n-1}\lceil \log_2(d[i]+1)\rceil,\;
\max_{0\le k\le K-1}\lceil \log_2(Q[k]+1)\rceil
\Bigr\}.
\]
Then Algorithm~\ref{alg:feasible-global-positions} decides feasibility in polynomial time. More precisely:
\begin{enumerate}[label=\textup{(\roman*)},leftmargin=2em]
    \item on a unit-cost RAM (or word-RAM with word size at least \(L+\log n\)), its runtime is
    \[
    O(n^2K);
    \]
    \item on the standard bit model, its runtime is
    \[
    O\!\bigl(n^2K + (n+K)(L+\log n)\bigr).
    \]
\end{enumerate}
\end{proposition}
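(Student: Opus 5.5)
The proof is a line-by-line cost accounting of Algorithm~\ref{alg:feasible-global-positions}. Its four checks (i)--(iv) are organised as a constant number of linear scans over the bitstring \(b\), plus bookkeeping on arrays of length \(n\) and \(K\). Correctness, meaning that the algorithm accepts exactly the configurations satisfying \eqref{eq:block-onehot}, \eqref{eq:item-once}, \(D_k\le Q_k\) and contiguity, is taken as given by construction. The claim to establish is the runtime. I will read \(b\) in the lexicographic layout \(b[jS+s]\) with \(S=nK\) and \(s=i+nk\), exactly as in Sec.~\ref{subsec:binary-compression}, so each index decodes to \((i,j,k)\) in \(O(1)\) word operations.

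\emph{Step 1 (one-hot and decoding).} For each block \(j\), scan its \(S=nK\) entries, count the ones, and record the unique active symbol \(s_j\), decoded as \((i_j,k_j)\). If some block has a count different from 1, reject. This costs \(O(n\cdot nK)=O(n^2K)\) and produces the label sequence \((i_j,k_j)_{j\in[n]}\).

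\emph{Step 2 (uniqueness).} Maintain a seen-array of length \(n\). Each \(i_j\) is checked and marked in \(O(1)\), giving \(O(n)\) in total. By Lemma~\ref{lem:perm} and Theorem~\ref{thm:colored-perms}, passing Steps 1--2 is equivalent to the colored-permutation condition, and the sequence \((i_j)_j\) is then a permutation.

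\emph{Step 3 (capacity).} Accumulate \(D_{k_j}\mathrel{+}=d[i_j]\) over \(j\), then compare \(D_k\le Q[k]\) for every \(k\). This takes \(n\) additions and \(K\) comparisons.

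\emph{Step 4 (contiguity).} Scan \(j=1,\dots,n\) with a ``closed'' flag array of length \(K\). At each change \(k_{j-1}\neq k_j\), mark \(k_{j-1}\) as closed, and reject if \(k_j\) is already closed. This costs \(O(n+K)\).

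Summing, the unit-cost RAM count is \(O(n^2K+n+K)=O(n^2K)\), since \(n,K\ge1\) gives \(n+K\le 2n^2K\). This proves (i). The word-size hypothesis is what justifies charging unit cost to the arithmetic in Step 3: every partial load satisfies \(D_k\le n\max_i d[i]<n2^L\), so it fits in \(L+\log n\) bits. Indices are at most \(n^2K\), which I assume also fit in a word, as is standard.

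For (ii), I will separate the bit-complexity charges. The scans in Steps 1, 2 and 4 manipulate single bits and counters, and I argue they remain \(O(n^2K)\) amortised. In Step 1, I will charge index arithmetic incrementally: advance a pointer by one and carry \(i\) into \(k\) when \(i\) wraps around. Incrementing a counter has amortised \(O(1)\) bit cost, and the one-counter in each block can be capped at 2. The arithmetic in Step 3 is the only place where numbers of length \(L+\log n\) appear. There are \(n\) additions into accumulators of at most \(L+\log n\) bits, costing \(O(n(L+\log n))\). There are \(K\) comparisons of \((L+\log n)\)-bit numbers, and initialising \(K\) accumulators costs \(O(K(L+\log n))\). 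Together these give the \((n+K)(L+\log n)\) term.

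The main obstacle is the bit-model bookkeeping in Step 1 and in the array accesses. A naive analysis charges \(O(\log(n^2K))\) per index computation and per table lookup, which would add a spurious \(\log\) factor to the \(n^2K\) term. My first approach is the incremental-pointer and amortised-counter argument above, combined with the convention that random access to the \(n\)- and \(K\)-length arrays is a unit-cost operation of the model. Failing that, I would state the bound under the standard multitape-with-addressing convention the paper implicitly assumes.
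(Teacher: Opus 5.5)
Your proposal is correct and follows the same line-by-line cost accounting as the paper's proof. It has $n\cdot nK=n^2K$ bit inspections for the block scan, $O(1)$ bookkeeping per block, and $n$ additions plus $K$ comparisons on numbers of at most $L+\lceil\log_2 n\rceil$ bits, giving (i) and the $(n+K)(L+\log n)$ term in (ii); your bit-model handling of pointer increments, the capped one-counter and accumulator initialization is more careful than the paper's, which charges these $O(1)$. The one deviation is that your Step~4 analyzes a closed-flag scan rather than the algorithm's actual test $\texttt{lastpos}[k]-\texttt{firstpos}[k]+1=\texttt{count}[k]$ in the final $K$-loop, which is $O(1)$ per vehicle, so the bound is unchanged but you should analyze the check the algorithm actually performs.
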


\begin{proof}
We analyze the steps of Algorithm~\ref{alg:feasible-global-positions}.

\paragraph{Initialization.}
The arrays
\[
\texttt{seen}[0{:}n{-}1],\quad
\texttt{load}[0{:}K{-}1],\quad
\texttt{count}[0{:}K{-}1],\quad
\texttt{firstpos}[0{:}K{-}1],\quad
\texttt{lastpos}[0{:}K{-}1]
\]
are initialized in \(O(n+K)\) time.

\paragraph{Block scan.}
There are exactly \(n\) global-position blocks, and each block has size
\[
S=nK.
\]
For each block \(j\), the inner loop scans all \(S\) entries of that block in the worst case in order to verify one-hotness and locate the unique active symbol. Hence the total number of bit inspections is
\[
n\cdot S = n(nK)=n^2K.
\]
All other operations performed once the active symbol is found---namely decoding
\((i,k)\), checking \(\texttt{seen}[i]\), updating \(\texttt{count}[k]\), and updating
\(\texttt{firstpos}[k]\), \(\texttt{lastpos}[k]\)---take constant time per block on a unit-cost RAM. Thus, apart from arithmetic on the load variables, the full scan costs \(O(n^2K)\).

\paragraph{Capacity accumulation.}
For each of the \(n\) blocks, the algorithm performs one update
\[
\texttt{load}[k] \gets \texttt{load}[k] + d[i].
\]
If the demands and capacities are represented using at most \(L\) bits, then every partial load satisfies
\[
\texttt{load}[k]\le \sum_{i=0}^{n-1} d[i],
\]
so its bit-length is at most \(L+\lceil \log_2 n\rceil\). Therefore, on the bit model, each addition costs
\[
O(L+\log n),
\]
and all \(n\) such additions together cost
\[
O\!\bigl(n(L+\log n)\bigr).
\]

\paragraph{Final checks.}
The final loop over vehicles has length \(K\). For each vehicle \(k\), the algorithm performs:
\begin{enumerate}[label=\textup{(\alph*)},leftmargin=2em]
    \item one capacity comparison \(\texttt{load}[k] > Q[k]\), and
    \item one contiguity check
    \[
    \texttt{lastpos}[k]-\texttt{firstpos}[k]+1 \stackrel{?}{=} \texttt{count}[k].
    \]
\end{enumerate}
The contiguity check uses only indices and counters bounded by \(n\), so it is \(O(1)\) on either model. The capacity comparison costs \(O(1)\) on a unit-cost RAM and
\(O(L+\log n)\) on the bit model. Thus the total cost of the final loop is
\[
O(K)
\quad\text{on a unit-cost RAM,}
\]
and
\[
O\!\bigl(K(L+\log n)\bigr)
\quad\text{on the bit model.}
\]

\paragraph{Total runtime.}
Summing all contributions yields
\[
O(n^2K)
\]
on a unit-cost RAM, and
\[
O\!\bigl(n^2K + n(L+\log n) + K(L+\log n)\bigr)
=
O\!\bigl(n^2K + (n+K)(L+\log n)\bigr)
\]
on the bit model. Proving the claim.
\end{proof}

\begin{corollary}[Polynomial-time classical post-processing in PHQC]
\label{cor:phqc-postproc-poly}
Let \(|\mathcal G|\) be the number of parameter pairs in the coarse CE--QAOA grid and let \(S_{\mathrm{shots}}\) be the number of samples drawn per grid point. Then the total classical feasibility-filtering cost inside PHQC is polynomial. In particular, on a unit-cost RAM it is
\[
O\!\bigl(|\mathcal G|\,S_{\mathrm{shots}}\,n^2K\bigr),
\]
and on the bit model it is
\[
O\!\Bigl(|\mathcal G|\,S_{\mathrm{shots}}
\bigl[n^2K + (n+K)(L+\log n)\bigr]\Bigr).
\]
Hence, whenever \(|\mathcal G|\) and \(S_{\mathrm{shots}}\) are polynomially bounded, the full classical post-processing stage of PHQC is polynomial-time.
\end{corollary}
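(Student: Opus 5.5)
The plan is to reduce the corollary directly to Proposition~\ref{prop:feasible-polytime} by counting how many times the feasibility oracle is invoked. In PHQC, the quantum stage is run once per parameter pair $(\gamma,\beta)\in\mathcal G$, and each run returns $S_{\mathrm{shots}}$ measured bitstrings. Every returned bitstring is passed exactly once to Algorithm~\ref{alg:feasible-global-positions}, either directly in the one-hot register $b\in\{0,1\}^{n^2K}$ or, in the compressed variant, after the blockwise lift of Algorithm~\ref{alg:decode-binary-to-onehot}. The total number of oracle calls is therefore exactly $|\mathcal G|\,S_{\mathrm{shots}}$.

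Next, I will bound the cost of a single call uniformly over all samples. Proposition~\ref{prop:feasible-polytime} gives a worst-case bound of $O(n^2K)$ on a unit-cost RAM and $O\!\bigl(n^2K+(n+K)(L+\log n)\bigr)$ on the bit model. These bounds depend only on the instance data $(n,K,d,Q)$ and not on the particular bitstring, since $L$ is defined from $d$ and $Q$ alone. Summing the per-call bound over the $|\mathcal G|\,S_{\mathrm{shots}}$ independent calls then yields the two stated totals.

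If the compressed pipeline is used, I must also check that the lifting step does not dominate. Algorithm~\ref{alg:decode-binary-to-onehot} does $O(q)$ work per block to read an integer, does $O(1)$ work per block to write one entry, and initializes $x$ in $O(nS)=O(n^2K)$. Its total cost is $O(n^2K+nq)$, and since $q=O(\log(nK))$ this is $O(n^2K)$. So the lift is absorbed into the same bound in both models.

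Finally, if $|\mathcal G|$ and $S_{\mathrm{shots}}$ are polynomial in the input size, then so is their product with a polynomial per-call cost, which proves the polynomial-time conclusion. The only step needing care is the accounting convention. Exact objective scoring after filtering is $O(n)$ per accepted sample on a unit-cost RAM and is dominated by the same bound. Bookkeeping of the best sample is $O(1)$ per sample. I therefore do not expect a real obstacle; the main point is to state explicitly that the per-sample bound is uniform, so that summation over samples is legitimate.
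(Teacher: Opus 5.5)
Your proposal is correct and follows the paper's argument: the paper's proof applies Proposition~\ref{prop:feasible-polytime} to each of the $|\mathcal G|\,S_{\mathrm{shots}}$ sampled bitstrings and multiplies, exactly as you do. The paper's proof does not include your extra checks, namely that the binary-to-one-hot lift costs $O(n^2K+nq)=O(n^2K)$ and that scoring and incumbent bookkeeping are dominated on a unit-cost RAM; these checks are accurate and make the ``full post-processing'' conclusion slightly more complete.
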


\begin{proof}
Apply Prop.~\ref{prop:feasible-polytime} independently to each sampled bitstring processed by PHQC. Since PHQC handles \(|\mathcal G|\,S_{\mathrm{shots}}\) sampled bitstrings in total, the result follows by multiplication.
\end{proof}

Algorithm~\ref{alg:feasible-global-positions} is the deterministic admissibility oracle used by the PHQC post-processing layer in App.~\ref{sec:PHQC}. Proposition~\ref{prop:feasible-polytime} shows that this deterministic admissibility test remains polynomial-time, so the feasibility filtering does not alter the polynomial overhead of the hybrid pipeline. For each sampled bitstring \(b\), PHQC first calls \textsc{FeasibleGlobalPositions}\((b)\). If the oracle rejects, the sample is discarded. If the oracle accepts, the classical checker evaluates the exact routing score \(E(b)=\langle b\mid H_{\mathrm{obj}}\mid b\rangle\) (or, if desired, the full diagonal score \(\langle b\mid H_C\mid b\rangle\)) and retains the best feasible sample seen across the entire parameter grid. In this way, the quantum stage is used only to generate candidate solutions, while the classical stage certifies feasibility and performs exact selection. In practice, early exits on a second ``1'' in a block, a repeated customer, or an immediate capacity violation substantially reduce average runtime. The resulting hybrid structure implies that success requires only that an optimal feasible bitstring be sampled at least once, not that it dominates the output distribution.


Beyond its role inside PHQC, Algorithm~\ref{alg:feasible-global-positions} has independent utility as a deterministic polynomial-time decoding and feasibility-certification oracle for structured routing samples. In particular, its use is not tied to CE--QAOA itself. Any quantum or quantum-inspired pipeline that produces candidate bitstrings can use the same oracle to decode samples, certify admissibility, and isolate the standard CVRP sector before exact classical scoring.

\begin{algorithm}[H]
\caption{\textsc{FeasibleGlobalPositions} --- Feasibility check for a sampled bitstring}
\label{alg:feasible-global-positions}
\begin{algorithmic}[1]
\Require Bitstring \(b\in\{0,1\}^{Q_{\mathrm{bits}}}\) with \(Q_{\mathrm{bits}}=n^2K\);
         demands \(d[0{:}n{-}1]\); vehicle capacities \(Q[0{:}K{-}1]\);
         integers \(n,K\).
\Ensure \textbf{true} iff \(b\) satisfies one-hot, uniqueness, capacity, and contiguity.
\State \(S \gets nK\) \Comment{block size (symbols per position)}
\State \textbf{initialize} \(\texttt{seen}[0{:}n{-}1]\gets\) \textbf{false}
\State \textbf{initialize} \(\texttt{load}[0{:}K{-}1]\gets 0\)
\State \textbf{initialize} \(\texttt{count}[0{:}K{-}1]\gets 0\)
\State \textbf{initialize} \(\texttt{firstpos}[0{:}K{-}1]\gets -1\)
\State \textbf{initialize} \(\texttt{lastpos}[0{:}K{-}1]\gets -1\)
\For{\(j \gets 0\) \textbf{to} \(n{-}1\)} \Comment{scan each global position block}
  \State \(L \gets j\cdot S\)
  \State \(\texttt{ones} \gets 0,\; s^\star \gets -1\)
  \For{\(s \gets 0\) \textbf{to} \(S{-}1\)} \Comment{locate the unique active symbol}
    \If{\(b[L+s]=1\)}
      \State \(\texttt{ones} \gets \texttt{ones}+1,\;\; s^\star \gets s\)
      \If{\(\texttt{ones}>1\)}
        \State \Return \textbf{false} \Comment{not one-hot}
      \EndIf
    \EndIf
  \EndFor
  \If{\(\texttt{ones}\neq 1\)}
    \State \Return \textbf{false} \Comment{zero-hot or multi-hot block}
  \EndIf
  \State \(i \gets s^\star \bmod n,\qquad k \gets \lfloor s^\star/n\rfloor\) \Comment{decode customer and vehicle}
  \If{\(\texttt{seen}[i]\)}
    \State \Return \textbf{false} \Comment{customer appears more than once}
  \Else
    \State \(\texttt{seen}[i] \gets \textbf{true}\)
  \EndIf
  \State \(\texttt{load}[k] \gets \texttt{load}[k] + d[i]\)
  \State \(\texttt{count}[k] \gets \texttt{count}[k] + 1\)
  \If{\(\texttt{firstpos}[k] = -1\)}
    \State \(\texttt{firstpos}[k] \gets j\)
  \EndIf
  \State \(\texttt{lastpos}[k] \gets j\)
\EndFor
\For{\(k \gets 0\) \textbf{to} \(K{-}1\)}
  \If{\(\texttt{load}[k] > Q[k]\)}
    \State \Return \textbf{false} \Comment{capacity violation for vehicle \(k\)}
  \EndIf
  \If{\(\texttt{count}[k] > 0\)}
    \If{\(\texttt{lastpos}[k]-\texttt{firstpos}[k]+1 \neq \texttt{count}[k]\)}
      \State \Return \textbf{false} \Comment{vehicle \(k\) appears in multiple disjoint segments}
    \EndIf
  \EndIf
\EndFor
\State \Return \textbf{true}
\end{algorithmic}
\end{algorithm}

\subsection{Polynomial-time Hybrid Quantum--Classical Solver (PHQC)}
\label{sec:PHQC}
We operationalize the full hybrid pipeline with classical post-processing in the routing version of polynomial-time hybrid quantum--classical (PHQC) framework introduced in Ref. \cite{onahce} ( App.~\ref{sec:PHQC}) where the quantum samples generated by CE-QAOA are handed over to Alg. \ref{alg:feasible-global-positions} for feasibility checks and scoring.  The hybrid separation is important algorithmically. The globally optimal feasible solution need not be the most frequent sample. It is enough that the quantum stage assigns it nonzero probability, since a single optimal feasible hit is sufficient for the deterministic checker to recover it. 

\begin{algorithm}[H]
\caption{\textbf{PHQC} --- CE--QAOA grid search with feasibility filtering and scoring}
\label{alg:PHQC}
\begin{algorithmic}[1]
\Require depth \(p\); coarse grid \(\mathcal G\subset[0,\pi]^2\); cost Hamiltonian \(H_C\); objective Hamiltonian \(H_{\mathrm{obj}}\); mixer \(U_M\); initial state \(\ket{s_0}\); shots \(S\) per grid point.
\Ensure best feasible sampled bitstring \(b^\star\) and its score \(E^\star\).
\State \(b^\star \gets \texttt{null}\), \(\;E^\star \gets +\infty\)
\For{each \((\beta,\gamma)\in\mathcal G\)}
    \For{\(r=1,\dots,S\)}
        \State prepare \(\ket{s_0}\)
        \State apply the CE--QAOA circuit for \((\beta,\gamma)\)
        \State measure a bitstring \(b\)
        \If{\textsc{FeasibleGlobalPositions}\((b)\)}
            \State compute \(E(b)=\langle b\mid H_{\mathrm{obj}}\mid b\rangle\)
            \If{\(E(b)<E^\star\)}
                \State \(b^\star \gets b\), \(\;E^\star \gets E(b)\)
            \EndIf
        \EndIf
    \EndFor
\EndFor
\State \Return \((b^\star,E^\star)\)
\end{algorithmic}
\end{algorithm}

We further extend the pipeline to include a coarse parameter-grid search. A single appearance is sufficient for exact recovery.

Let \(p\) be the circuit depth and let
\[
\mathcal G
=
\{(\beta_a,\gamma_b): 0\le a\le N_\beta,\ 0\le b\le N_\gamma\}
\subset [0,\pi]\times[0,\pi]
\]
be a rectangular coarse grid, for example
\[
\beta_a = a\,\Delta_\beta,\qquad
\gamma_b = b\,\Delta_\gamma,
\qquad
\Delta_\beta = \frac{\pi}{N_\beta},\quad
\Delta_\gamma = \frac{\pi}{N_\gamma}.
\]
For each \((\beta,\gamma)\in\mathcal G\), PHQC prepares the CE--QAOA state
\[
\ket{\psi_p(\beta,\gamma)}
=
U_M(\beta)\,e^{-i\gamma H_C}\ket{s_0}
\]
(for \(p=1\); or depth-\(p\) generalization), samples it \(S\) times, and feeds each measured bitstring into the deterministic feasibility oracle
\(\textsc{FeasibleGlobalPositions}\) from Alg.~\ref{alg:feasible-global-positions}.

\subsection{Numerical Implementation}
\label{sec:impl}

On each block (local size $nK$) we use the normalized XY mixer restricted to the one–hot sector $\mathcal H_1^{(nK)}$. The full mixer layer is the block tensor product
\begin{equation}
\label{eq:xy-full}
U_M(\beta)\;=\;\bigotimes_{b=1}^{n} U_M^{(b)}(\beta).
\end{equation}
The per–block initial state is the uniform one–hot (a $W$–state) over $nK$ symbols,
\begin{equation}
\label{eq:start-block}
\ket{s_{\mathrm{blk}}}
\;=\;
\frac{1}{\sqrt{nK}}\sum_{(i,k)\in[n]\times[K]}\ket{e_{(i,k)}},
\qquad
\ket{s_0}\;=\;\ket{s_{\mathrm{blk}}}^{\otimes n}.
\end{equation}

A depth–$p$ CE--QAOA circuit alternates diagonal cost and the block–XY mixer:
\begin{equation}
\label{eq:ansatz}
\ket{\psi_p(\vec\gamma,\vec\beta)}
\;=\;
\Bigl(\prod_{\ell=1}^{p} U_M(\beta_\ell)\,e^{-i\gamma_\ell H_C}\Bigr)\ket{s_0}.
\end{equation}
In the implementation we instantiate $H_C$ from a QUBO dictionary via the
Ising map $x\mapsto(1{-}Z)/2$, yielding a $Z/ZZ$ Pauli operator.
We use $p=1$ and grid–search $(\gamma,\beta)$ on a coarse lattice
($O(nK)$ points per axis); the shot budget follows the $S=50\,(nK)^3$ rule. The circuit primitives with a qiskit wrapper \cite{Qiskit2023}  were first reported in \cite{onahce}. As in Ref. \cite{onahce}, we used the \texttt{matrix\_product\_state} backend of the IBM Qiskit \textsc{Aer} simulator\cite{Qiskit2023}. For reproducibility, we configured the backend with an \emph{aggressive truncation} regime with maximum bond dimension of $128$, truncation and validation thresholds of $10^{-3}$, and an amplitude--chopping threshold of $10^{-3}$. These circuit simulations are in the sense of Vidal’s tensor–network formulation~\cite{Vidal2003Efficient}. We collect implementation details in App. \ref{app:impl}.

\paragraph{Goal and evaluation protocol.}
Our numerical study is designed to evaluate the end-to-end performance of the proposed hybrid routing pipeline in Alg. \ref{alg:PHQC} on small benchmark CVRP instances while preserving the modeling choices of this paper. For each instance, we run depth $p=1$ CE--QAOA over a fixed $(\gamma,\beta)$ grid, sample the resulting circuit at each grid point, post-select feasible solutions and report the best feasible routing cost observed. We benchmark on the QOPTLib family instances used in related hybrid QUBO studies\cite{Osaba2024Qoptlib}.

Each instance contains
(i) the number of non-depot customers $n$, (ii) customer coordinates,
(iii) depot coordinate, (iv) demand vector $(d_i)_{i=1}^n$, and (v) per-vehicle
capacity $Q$, with a fixed number of vehicles $K=2$ for all instances. Distances are computed from the Euclidean
metric using precomputed matrices:
\[
W_{i,i'}=\|c_i-c_{i'}\|_2,\qquad
w_{\mathsf D\to i}=\|c_{\mathsf D}-c_i\|_2,\qquad
w_{i\to\mathsf D}=\|c_i-c_{\mathsf D}\|_2,
\]
so that subsequent cost evaluation is $O(1)$ per queried edge. See  additional implementation details in App. \ref{app:impl}. Results of the numerical study are collected in Table \ref{tab:numerical}.  Results of the numerical study are collected in
Tables~\ref{tab:numerical} and \ref{tab:numerical-binary}.  Table~ \ref{tab:numerical} reports the one-hot colored-permutation realization, while
Table~\ref{tab:numerical-binary} reports the corresponding
binary-compressed realization.

The QOptLib values shown are taken from the \emph{Hybrid} column in their report, where the
underlying QUBO is decomposed into smaller subproblems and solved via a hybrid quantum--classical pipeline. In contrast, we treat each instance as a single, non-decomposed problem within our  global-position formulation. The best feasible cost returned by our PHQC pipeline matched the independently verified optimum for every instance in Table~\ref{tab:numerical}. Optimality was verified separately using a branch-and-bound solver implemented in Gurobi \cite{gurobi}. The \(\dagger\) symbol in the table indicates that the value reported in Ref.~\cite{Osaba2024Qoptlib} lies below the optimum verified by our classical check. Both the quantum and classical implementations are made publicly available in \cite{onahcvrpdata}.

\begin{table}[t]
\centering
\caption{Performance on CVRP instances ($K=2$) proposed in Ref.~\cite{Osaba2024Qoptlib}. The qubit count is $Kn^2$ where $n$ is the number of cities and $K=2$ is the number of vehicles. The values in the ``This work'' column are the best feasible costs returned
by PHQC and coincide with the optima independently verified by a Gurobi
branch-and-bound check. The \(\dagger\) symbol marks a QOPTLib Hybrid value
that lies below the optimum certified under our feasibility and scoring
convention. }
\label{tab:numerical}
\begin{tabular}{lccc}
\toprule
Instance & Qubits & QOPTLib (Hybrid) & This work \\
\midrule
P-n41.vrp & 32  & 97  & 69 \\
P-n42.vrp & 32  & 121 & 96 \\
P-n51.vrp & 50  & 94  & 94 \\
P-n52.vrp & 50  & 295 & 295 \\
P-n61.vrp & 72  & 118 & 118 \\
P-n62.vrp & 72  & 122 & 121 \\
P-n71.vrp & 98  & $119^{\dagger}$ & 132 \\
P-n72.vrp & 98  & 164 & 163 \\
P-n81.vrp & 128 & 153 & 136 \\
P-n82.vrp & 128 & 269 & 225 \\
\bottomrule
\end{tabular}
\end{table}

\begin{table}[t]
\centering
\caption{Performance on CVRP instances ($K=2$) proposed in Ref.~\cite{Osaba2024Qoptlib} under the implementation of binary-compressed colored-permutation encoding proposed in sec \ref{subsec:binary-compression}. The qubit count is $n\lceil \log_2(Kn)\rceil$, where $n$ is the number of cities and $K=2$ is the number of vehicles.}
\label{tab:numerical-binary}
\begin{tabular}{lccc}
\toprule
Instance & Qubits & QOPTLib (Hybrid) & This work \\
\midrule
P-n41.vrp & 12 & 97  & 69  \\
P-n42.vrp & 12 & 121 & 96  \\
P-n51.vrp & 20 & 94  & 94  \\
P-n52.vrp & 20 & 295 & 295 \\
P-n61.vrp & 24 & 118 & 118 \\
P-n62.vrp & 24 & 122 & 121 \\
P-n71.vrp & 28 & $119^{\dagger}$ & 132 \\
P-n72.vrp & 28 & 164 & 163 \\
P-n81.vrp & 32 & 153 & 136 \\
P-n82.vrp & 32 & 269 & 225 \\
\bottomrule
\end{tabular}
\end{table}


\subsection{Qubit efficiency compared to prior CVRP formulations.}
\paragraph{Counting sanity check.}
There are $n!$ permutation matrices $P$ over $n$ locations. For each $P$, we may choose any of $K$ trucks
independently for each of the $n$ positions, so there are $K^n$ colorings.
Thus the number of feasible bitstrings is $n!\,K^n$.
(These are the computational-basis states satisfying \eqref{eq:block-onehot}--\eqref{eq:item-once}.) On each block (position $j$), the local Hilbert space is $\mathcal H_1\cong\mathbb C^{nK}$
spanned by the basis $\{\ket{(i,k)}\}$ with \emph{one} excitation
(one-hot over $(i,k)$).\newline The full encoded space is the tensor product
$\OH=\mathcal H_1^{\otimes n}$, whose computational basis states are the
bitstrings satisfying the per-block one-hot condition \eqref{eq:block-onehot}. The initial state
\begin{equation}
\ket{s_0} \;=\; \Bigl(\tfrac{1}{\sqrt{nK}}\sum_{(i,k)} \ket{(i,k)}\Bigr)^{\otimes n}
\end{equation}
is the equal-amplitude superposition over \emph{all} such block-one-hot bitstrings. Adding the diagonal assignment penalty for \eqref{eq:item-once} and the (diagonal) travel cost does not change the computational basis; it only rephases amplitudes. Therefore, the encoded manifold \emph{contains} (as basis states) \emph{all} “colored permutations” $X$ characterized in Theorem~\ref{thm:colored-perms}, and the CE--QAOA mixer  preserves this manifold while the diagonal cost couples blocks. 

In contrast to several quantum-assisted CVRP formulations that \emph{explicitly add auxiliary binary variables} for capacity handling, our formulation is qubit efficient. In the hybrid, cluster-first approach of Feld \emph{et al.}, the capacity constraint in the knapsack-style clustering phase is modeled with extra selector/slack variables, thereby increasing the number of binary decision variables (and hence logical qubits) used by the QUBO formulation \cite{FeldEtAl2018CVRP}. Similarly, Palackal \emph{et al.} introduce additional binary variables to encode vehicle-load levels within their quantum-assisted pipeline for CVRP, which again raises the logical-qubit count required by the constraint representation \cite{PalackalEtAl2023QCVRP,bentley2022q}. In both cases, the qubit overhead scales with the granularity of the capacity encoding (e.g., unary/one-hot versus binary-coded load). By contrast, our protocol maintains a logical-qubit footprint fixed by the problem’s decision variables with visible scaling benefits in the near term non-asymptotic regimes (See Fig. \ref{fig:routing-scale}). See further discussion on industry sized quantum routing in Sec. \ref{sec:indu}.

\begin{figure}[t]
    \centering
    \includegraphics[width=\linewidth]{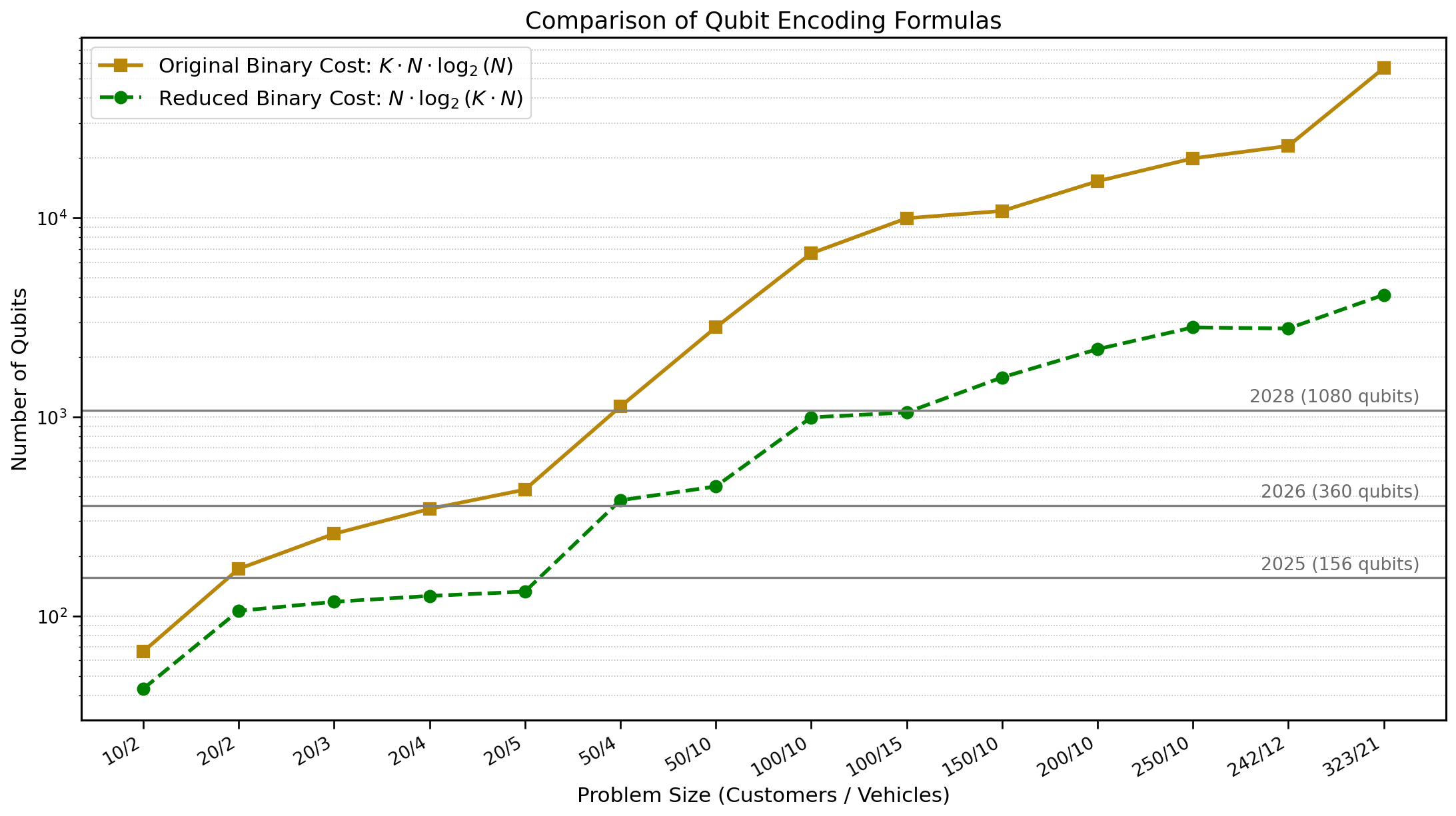}
    \caption{Comparison of qubit counts for two binary global-position encodings across representative CVRP instance sizes. The original separated encoding scales as \(Q_{\mathrm{sep}} = K N \lceil \log_2 N \rceil\), whereas the reduced colored-permutation encoding scales as \(Q_{\mathrm{red}} = N \lceil \log_2(KN) \rceil\). The horizontal lines indicate illustrative hardware thresholds. The reduced encoding changes the linear factor of \(K\) in the number of qubits to a logarithmic dependence and shifts several small-scale industrial routing regimes into the few-hundred to \(\sim 10^3\)-qubit range.}
    \label{fig:routing-scale}
\end{figure}

\subsection{Discussion:  near term quantum routing optimization at the industrial scale}
\label{sec:indu}

The main obstacle to scaling the present routing formulation is the qubit cost of the local alphabet representation. In the one-hot global-position encoding used in this work, each of the \(n\) position blocks carries an alphabet of size \(S=nK\), so the total qubit count is
\[
Q_{\text{1hot}} \;=\; nS \;=\; n^2K.
\]
This representation is theoretically attractive because it is aligned with the Onah--Firt--Michielsen (OFM) kernel with well studied performance guarantees\cite{onahce, onahfund,onahfinite}. However, the quadratic-in-\(n\) footprint makes one-hot the dominant bottleneck long before circuit depth or shot count becomes the limiting factor. To push quantum routing toward industrial scale instances in the near term, the most direct route is to preserve the \emph{global-position} viewpoint while compressing the \emph{local alphabet} from one-hot to binary. A naive binary implementation with vehicle-separated position registers scales as
\[
Q_{\text{sep}} \;\approx\; Kn\log_2 n,
\]
whereas the colored-permutation viewpoint reduces this to
\[
Q_{\text{col}} \;\approx\; n\log_2(nK),
\]
up to the usual ceiling effects. The reduction comes from encoding a single composite symbol \((i,k)\in[n]\times[K]\) in each position block, instead of carrying a separate position register for each vehicle. In other words, the colored-permutation representation removes an unnecessary factor of \(K\) from the number of blocks while keeping the same global-ordering logic that makes the cost and capacity structure diagonal and ancilla-free. 

In the present work we also implement the binary-compressed colored-permutation representation operationally and decode measured binary samples back into the original one-hot routing tensor before certification. In this way, the same deterministic feasibility oracle used throughout the paper remains the final admissibility backend for both realizations. The one-hot formulation therefore supplies the native Onah--Firt--Michielsen (OFM) kernel realization and its inherited guarantee structure, while the binary formulation supplies the qubit-compressed execution layer. On the tested QOPTLib benchmark instances, both realizations recover the independently verified optima under the same final certification rule. This closes the loop between the theorem-aligned one-hot construction and the practically compressed routing representation.

This compression is already substantial in the regime most relevant for early industrial use cases. As illustrated in Fig.~\ref{fig:routing-scale}, instances with \(50\) to \(100\) customer locations and \(5\) to \(10\) vehicles move from the multi-thousand-qubit regime under the separated binary encoding into the few-hundred to roughly \(10^3\)-qubit regime under the reduced colored-permutation binary encoding. For example, at \((n,K)=(50,10)\), the separated encoding requires about \(4\times 10^3\) qubits, while the reduced encoding requires only about \(4.5\times 10^2\). These small-scale but operationally meaningful routing instances enter a logical-register regime where near-term hardware experiments become concrete, provided the compressed mixer and diagonal phase-separator can be implemented with sufficient fidelity.


\paragraph{Regimes of quantum utility.}

A natural way to calibrate near-term quantum utility in routing is to ask whether a quantum-enabled workflow can close open optimality gaps on benchmark instances that remain nontrivial at the current classical frontier. Classical exact methods for CVRP are already remarkably strong. Modern branch-cut-and-price solvers can solve many instances with up to a few hundred customers and have pushed the proven-optimal frontier far beyond the historical \(150\)-customer regime \cite{Pessoa2020,bentley2022q}. Likewise, modern heuristic frameworks routinely achieve very small gaps with modest runtimes on medium-scale routing benchmarks, often remaining below ten minutes on average for problems in the \(100\)--\(200\) customer range \cite{Vidal2014}. Against that backdrop, the ORTEC benchmark instances \texttt{ORTEC-n242-k12} and \texttt{ORTEC-n323-k21}---the ``242/12'' and ``323/21'' cases highlighted in Fig.~\ref{fig:routing-scale}---are especially interesting. They are real-world CVRP instances from the DIMACS/CVRPLIB collection, derived from a US-based grocery delivery service, and their incumbent objective values are listed in CVRPLIB as best known rather than proven optimal \cite{DIMACS2021,CVRPLIBInstances}. In a broader benchmark sense, several instances from the classical Golden set, including \texttt{Golden\_1}, \texttt{Golden\_5}, and \texttt{Golden\_6}, also remain without an optimality certificate at their current incumbent values \cite{Golden1998}. A quantum routing pipeline that either certifies optimality at the current incumbent value or produces a strictly better verified feasible solution for one of these unresolved instances could reasonably mark the onset of \emph{quantum utility} by contributing a verifiable improvement on a benchmark that remains open at the current classical record for decades.

These two realizations separate the roles of structure and scale in a useful way. The one-hot construction provides the clean encoded-manifold formulation in which the CE--QAOA reduction is exact and the routing semantics are transparent. The binary construction preserves the same global-position routing logic at substantially reduced qubit cost and feeds into the same exact post-processing layer after inverse translation. Consequently, the remaining obstacle to larger routing instances is no longer the logical representation itself, but the ability of near-term hardware to execute the compressed circuit with sufficient fidelity. This is precisely the regime in which the present formulation makes the question of quantum routing utility concrete and testable.

\section{Conclusion}

We introduced a global-position colored-permutation formulation for capacitated vehicle routing in which \(n\) positions choose symbols \((i,k)\in[n]\times[K]\) such that each customer appears exactly once, and vehicle labels partition the resulting permutation into \(K\) disjoint partial permutations. This viewpoint yields a compact and operational decoding where measured bitstrings reshape into an \(n\times K\times n\) tensor whose \(K\) slices are per-vehicle partial permutation matrices, from which ordered route segments are read off on the global timeline. On the algorithmic side, we instantiated the Onah--Firt--Michielsen (OFM) kernel as the assignment-and-ordering formulation of CVRP. The initial state spans all block-one-hot assignments; the normalized block-XY mixer preserves the block-one-hot manifold and mixes within it; and the diagonal routing Hamiltonian couples blocks through both assignment penalties and routing edges, ensuring entangling phase separation. A central practical outcome is that capacity enforcement is naturally ancilla-free in the logical encoding. For heterogeneous-demand instances, however, capacity is best viewed as an additional diagonal routing constraint, enforced either by post-selection or by instance-dependent diagonal biasing, rather than as part of the fully label-symmetric penalty sector. 

A brief account of the analytic mechanism behind the performance guarantees is given in Appendix~\ref{app:fejer-mechanism}. In particular, we recall how the dephased Fej\'er-filter reference model isolates the phase-selection mechanism responsible for the dimension-free success bound and clarifies the instance-dependent quantities controlling finite-depth and finite-shot optimality in the CVRP formulation. Finally, we highlighted a clear scaling path obtained by replacing one-hot blocks with binary-coded symbol registers, thereby reducing the qubit footprint from \(n^2K\) to \(\Theta(n\log(nK))\) while retaining the global-position colored-permutation structure. Achieving this at full strength will require new symbol-respecting mixers on binary registers that preserve the favorable CE--QAOA dynamics on the encoded manifold. Developing and benchmarking such mixers, together with sharper conditions under which contiguous single-route solutions are selected from the global-position formulation, appears to be the most direct route to pushing quantum routing optimization to substantially larger instance sizes without sacrificing the kernel-level advantages that motivate the approach.

\section*{Data Availability} All data in this paper and the Python implementation in Qiskit are made available here \url{https://doi.org/10.5281/zenodo.18798145}.

\section*{Conflict of Interests}
All authors declare no competing interests.

\section*{Author contributions}
C.O. wrote the main paper and prepared all the figures, K.M. provided guidance during the project. All the authors reviewed the paper.

\section*{Funding}
This research received no external funding.

\section*{Additional information}
All our results are reported in the main paper.

\textbf{Correspondence and requests for materials} should be addressed to C.O.\ (\texttt{chinonso.calistus.onah@volkswagen.de}).

\appendix

\section{Pickup--Delivery Problem (PDP)}
\label{subsec:pdp-global}

\paragraph{Sets and data.}
There are \emph{atomic tours} \(t\in[T]\); each tour \(t\) internally represents a pickup location \(p_t\) and a delivery location \(d_t\) (pickup and delivery happen within the same job). Trucks \(k\in[K]\) have depots \(\mathtt{dep}_k\) (potentially distinct). Define the inter-tour cost
\[
  \widetilde w_{t\to t'}\ :=\ \mathrm{dist}\bigl(d_t,\ p_{t'}\bigr),
\]
possibly asymmetric.
Depot connections are
\[
w_{\mathtt{dep}_k\to t}:=\mathrm{dist}(\mathtt{dep}_k,p_t),
\qquad
w_{t\to \mathtt{dep}_k}:=\mathrm{dist}(d_t,\mathtt{dep}_k).
\]
When capacities are not modeled, one may set the resource weights \(d_t\) to zero in the constraints so that the objective alone drives the optimization.

\paragraph{Blocks, symbols, one-hot manifold.}
As in CVRP, take \(m:=T\) global positions \(j\in[T]\) and the local alphabet
\[
\mathcal A_{\mathrm{PDP}} \;=\; \{(t,k)\,:\, t\in[T],\,k\in[K]\},
\qquad
n_{\mathrm{loc}}^{\mathrm{PDP}}=TK.
\]
Let \(X^{(j)}_{t,k}\) be the projector for symbol \((t,k)\) on block \(j\).

\paragraph{Global uniqueness and block one-hot.}
Each atomic tour appears exactly once globally, so we impose
\begin{align}
\label{eq:pdp-once}
H_{\mathrm{once}}^{\textsc{B}}
&:=\ \lambda_{\mathrm{once}} \sum_{t=1}^{T}
\Bigl(\ \sum_{j=1}^{T}\sum_{k=1}^{K} X^{(j)}_{t,k} \;-\; 1\ \Bigr)^{2}.
\end{align}
The per-block one-hot constraint
\begin{align}
\label{eq:pdp-block}
H_{\mathrm{blk}}^{\textsc{B}}
&:=\ \lambda_{\mathrm{blk}} \sum_{j=1}^{T}
\Bigl(\ \sum_{t=1}^{T}\sum_{k=1}^{K} X^{(j)}_{t,k} \;-\; 1\ \Bigr)^{2}
\end{align}
is redundant in the hard-coded one-hot encoding and may therefore be omitted, i.e.\ one may set \(H_{\mathrm{blk}}^{\textsc{B}}=0\).

\paragraph{Objective (dead-miles between tours + depot endpoints).}
Define the PDP adjacency cost
\[
  \widetilde W_{(t,k)\to(t',k')}
  \;:=\;
  \begin{cases}
    \widetilde w_{t\to t'} & \text{if } k=k',\\[2pt]
    w_{t\to \mathtt{dep}_k}+w_{\mathtt{dep}_{k'}\to t'} & \text{if } k\neq k'.
  \end{cases}
\]
Then
\begin{align}
\label{eq:pdp-obj}
H_{\mathrm{obj}}^{\textsc{B}}
\;:=\;
\lambda_{\mathrm{obj}}\Biggl[
&\sum_{j=1}^{T-1}\sum_{t,t'=1}^{T}\sum_{k,k'=1}^{K}
  \widetilde W_{(t,k)\to(t',k')} \; X^{(j)}_{t,k}\,X^{(j+1)}_{t',k'}\\[-1pt]
&\quad + \sum_{t=1}^{T}\sum_{k=1}^{K} w_{\mathtt{dep}_k\to t}\; X^{(1)}_{t,k}
\quad + \sum_{t=1}^{T}\sum_{k=1}^{K} w_{t\to \mathtt{dep}_k}\; X^{(T)}_{t,k}
\Biggr].\nonumber
\end{align}

If desired, add a per-truck capacity/duty-time hinge-square penalty, exactly analogous to \eqref{eq:cvrp-A-cap}:
\[
H_{\mathrm{cap}}^{\textsc{B}}
\;:=\;
\lambda_{\mathrm{cap}}\sum_{k=1}^{K}
\Biggl(
\Bigl[
\sum_{j=1}^{T}\sum_{t=1}^{T} d_t\,X^{(j)}_{t,k} \;-\; Q_k
\Bigr]_+
\Biggr)^{2},
\]
or replace \(d_t\) by a duty-time or resource weight.

\begin{equation}
\label{eq:pdp-total}
H_C^{\textsc{B}} \;:=\; H_{\mathrm{once}}^{\textsc{B}} + H_{\mathrm{obj}}^{\textsc{B}} + H_{\mathrm{cap}}^{\textsc{B}}.
\end{equation}
Mixer and initial state are as in CVRP, with \(n_{\mathrm{loc}}\) replaced by \(n_{\mathrm{loc}}^{\mathrm{PDP}}=TK\) and \((i,k)\) replaced by \((t,k)\). The CE--QAOA stack is
\[
\ket{\psi_p(\vec\gamma,\vec\beta)}
=\bigl(\prod_{\ell=1}^p U_M(\beta_\ell)\,e^{-i\gamma_\ell H_C^{\textsc{B}}}\bigr)\ket{s_0}.
\]
Cross-block couplings in \(H_C^{\textsc{B}}\) again make the diagonal layer entangling.

\paragraph{Decoding.}
From a measured bitstring, read for each \(j\) the unique \((t,k)\) with value \(1\).
For truck \(k\), the subsequence of such positions in increasing \(j\) lists its assigned tours in order. Depot edges are implicitly added at boundaries via \eqref{eq:pdp-obj}. No explicit precedence is needed because each \(t\) already encapsulates pickup and delivery.

\subsection{Primer: Permutation matrices and bipartite view}
\label{sec:color}
Let $\sigma:\{1,\dots,n\}\to\{1,\dots,n\}$ be a permutation. Its permutation matrix $P\in\{0,1\}^{n\times n}$ is defined by
\[
  P_{i,j}=\begin{cases}
    1, & j=\sigma(i),\\
    0, & \text{otherwise}.
  \end{cases}
\]
Equivalently, interpret rows $L=\{1,\dots,n\}$ and columns $R=\{1,\dots,n\}$ as the left and right vertex sets of $K_{n,n}$. Then the support of $P$,
\(
  \supp(P)=\{(i,\sigma(i)):\ i\in L\}\subseteq L\times R,
\)
is a perfect matching.

\noindent
The cycle decomposition of $\sigma$ corresponds to directed cycles in the bipartite incidence picture via alternation of $L$- and $R$-indices.

\begin{definition}[Colored decomposition]
A \emph{colored decomposition} of a permutation matrix \(P\in\{0,1\}^{n\times n}\) is a collection of \((0,1)\)-matrices \(\{P^{(c)}\}_{c=1}^{C}\) such that:
\begin{enumerate}[label=(\roman*),leftmargin=2em]
  \item \emph{Partition:} the supports are pairwise disjoint and cover \(\supp(P)\):
  \[
      \supp(P)=\bigsqcup_{c=1}^C \supp\bigl(P^{(c)}\bigr),
      \qquad
      P=\sum_{c=1}^C P^{(c)}.
  \]
  \item \emph{Partial permutations:} each \(P^{(c)}\) has at most one \(1\) in every row and every column; equivalently, each \(P^{(c)}\) is a partial permutation matrix.
\end{enumerate}
We visualize the class index \(c\) as a color. Each color class encodes a partial matching on the support of \(P\).
\end{definition}

\begin{remark}[Cycle-based canonical coloring]
\label{rem:cycle_coloring}
The cycle decomposition of a permutation \(\sigma\) induces a distinguished \emph{special case} of a colored decomposition by assigning one color to each disjoint cycle. Concretely, if
\[
  \sigma=\mathcal{C}_1\,\mathcal{C}_2\cdots \mathcal{C}_C
\]
is a product of disjoint cycles, then
\[
P=\sum_{c=1}^C P^{(c)},
\]
where \(P^{(c)}\) is the permutation matrix supported on the indices belonging to the cycle \(\mathcal{C}_c\), and zero elsewhere. In this special cycle-coloring case, each nonzero color block is a permutation on its own support and decomposes into circulant cycle blocks.
\end{remark}

\begin{proposition}[Basic properties]
Let \(P=\sum_{c=1}^C P^{(c)}\) be a colored decomposition.
\begin{enumerate}[label=(\alph*),leftmargin=2em]
  \item \emph{Orthogonality of supports:} for \(c\neq d\), one has
  \[
  P^{(c)}\odot P^{(d)}=0.
  \]
  \item \emph{Row/column sparsity:} for each \(c\), every row and every column contains at most one \(1\) in \(P^{(c)}\).
  \item \emph{Special cycle-coloring case:} if the decomposition is the canonical cycle-coloring of Remark~\ref{rem:cycle_coloring}, then each color class is block diagonal, with one circulant permutation block per cycle, and the eigenvalues contributed by a cycle of length \(k\) are the \(k\)-th roots of unity.
\end{enumerate}
\end{proposition}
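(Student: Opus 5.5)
Part (a) is immediate from the definition of a colored decomposition. The Hadamard product \(P^{(c)}\odot P^{(d)}\) has \((i,j)\) entry \(P^{(c)}_{i,j}P^{(d)}_{i,j}\), and this entry is nonzero only if \((i,j)\in\supp(P^{(c)})\cap\supp(P^{(d)})\). For \(c\neq d\) that intersection is empty by the partition condition (i), so the product vanishes entrywise. Alternatively, one can argue numerically: since \(P=\sum_c P^{(c)}\) has entries in \(\{0,1\}\) and all summands are nonnegative \((0,1)\)-matrices, no two summands can both equal \(1\) at the same entry.

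For part (b), I would point to condition (ii), which states exactly this property. To make it independent of (ii), I would also give the direct argument. Each \(P^{(c)}\) is a \((0,1)\)-matrix with \(P^{(c)}\le P\) entrywise, since the summands are nonnegative. So the sum of any row or column of \(P^{(c)}\) is at most the corresponding sum for \(P\), which equals \(1\) because \(P\) is a permutation matrix. This is the same reasoning used at the end of Theorem~\ref{thm:colored-perms}.

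Part (c) carries the real content. Write \(\sigma=\mathcal C_1\cdots\mathcal C_C\), and let \(V_c\subseteq[n]\) be the support of the cycle \(\mathcal C_c=(a_0\,a_1\,\cdots\,a_{k-1})\). By Remark~\ref{rem:cycle_coloring}, \(P^{(c)}\) is \(P\) restricted to \(V_c\times V_c\) and zero elsewhere. The first step is to check that this matrix really lives on \(V_c\times V_c\): \(P_{i,\sigma(i)}=1\) and \(\sigma(V_c)=V_c\), so every nonzero entry with row in \(V_c\) also has its column in \(V_c\).

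Next, I would conjugate by the permutation matrix that reorders the indices as \(a_0,\dots,a_{k-1}\), listing the cycles one after another. After this reordering, \(P^{(c)}\) becomes block diagonal. Its single nonzero block is the \(k\times k\) cyclic shift \(Z_k\) with \((Z_k)_{r,r+1 \bmod k}=1\), which is a circulant matrix. Because this is a similarity transformation, it preserves the spectrum, so the statement ``block diagonal with one circulant block per cycle'' holds up to a simultaneous relabeling of rows and columns. I would state explicitly that the block diagonal claim is meant in this sense.

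For the eigenvalues, the shift satisfies \(Z_k^k=I\), and its minimal polynomial is \(x^k-1\), since \(I,Z_k,\dots,Z_k^{k-1}\) are linearly independent with disjoint supports. So the eigenvalues of \(Z_k\) are exactly the \(k\)-th roots of unity, each with multiplicity one. One can confirm this directly with the Fourier eigenvectors \(v_\omega=(1,\omega,\dots,\omega^{k-1})^{\top}\), which satisfy \(Z_k v_\omega=\omega v_\omega\) whenever \(\omega^k=1\).

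The only real obstacle is one of interpretation. As a matrix on all of \([n]\), \(P^{(c)}\) also has the eigenvalue \(0\) with multiplicity \(n-k\). The phrase ``eigenvalues contributed by a cycle'' should therefore be read as the spectrum of the restriction to \(V_c\). Equivalently, it is the nonzero spectrum of \(P^{(c)}\), and the union of these spectra over all cycles gives the spectrum of \(P\).
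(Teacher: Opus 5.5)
The paper states this proposition without proof; it passes directly to the $n=5$ example. There is therefore no argument of the authors' to compare against, and your proposal is correct and supplies one.

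Parts (a) and (b) are indeed definitional. You also observe that both follow from $P=\sum_c P^{(c)}$ with nonnegative $(0,1)$ summands and $P$ a permutation matrix. That observation is accurate: it makes condition (ii) of the definition redundant, mirroring the closing step of the proof of Theorem~\ref{thm:colored-perms}.

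For (c), you reduce by permutation similarity to the cyclic shift $Z_k$ and then compute eigenvalues via the minimal polynomial and the Fourier eigenvectors. This is the standard argument, and you carry it out correctly. That includes the index check that $P_{a_r,a_{r+1}}=1$ becomes $(Z_k)_{r,\,r+1 \bmod k}=1$, and the step from linear independence of $I,Z_k,\dots,Z_k^{k-1}$ to char poly $=$ min poly $=x^k-1$.

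Your two interpretive caveats are warranted rather than pedantic. In the paper's own example, $P^{(1)}$ is supported on $\{1,3,4\}\times\{1,3,4\}$ and is not literally block diagonal in the given ordering. Also, each $P^{(c)}$ viewed on all of $[n]$ carries the eigenvalue $0$ with multiplicity $n-k$.

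One small point is worth adding. For the identity $P=\sum_c P^{(c)}$ in Remark~\ref{rem:cycle_coloring} to hold, fixed points of $\sigma$ must be listed as $1$-cycles. Your argument covers $k=1$ without change: $Z_1=[1]$, with eigenvalue $1$.
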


As an example, consider $n=5$ and the permutation $\sigma=(1\ 3\ 4)(2\ 5)$ (written in cycle notation). The permutation matrix $P$ has $1$'s at positions
\[
(1,3),\ (3,4),\ (4,1)\quad\text{and}\quad (2,5),\ (5,2).
\]

\[
P \;=\;
\begin{bmatrix}
0&0&1&0&0\\
0&0&0&0&1\\
0&0&0&1&0\\
1&0&0&0&0\\
0&1&0&0&0
\end{bmatrix}
\]

A two-color decomposition $P=P^{(1)}+P^{(2)}$ is:
\[
  P^{(1)}=\begin{bmatrix}
  0&0&1&0&0\\
  0&0&0&0&0\\
  0&0&0&1&0\\
  1&0&0&0&0\\
  0&0&0&0&0
  \end{bmatrix},
  \qquad
  P^{(2)}=\begin{bmatrix}
  0&0&0&0&0\\
  0&0&0&0&1\\
  0&0&0&0&0\\
  0&0&0&0&0\\
  0&1&0&0&0
  \end{bmatrix},
  \qquad
  P=P^{(1)}+P^{(2)}.
\]
Here $P^{(1)}$ (red color) encodes the 3-cycle $(1\ 3\ 4)$ and $P^{(2)}$ (blue color) encodes the 2-cycle $(2\ 5)$.

\begin{center}
\begin{tikzpicture}[x=1.7cm,y=0.9cm,>=Latex]
  \foreach \i in {1,...,5} {
    \node (L\i) at (0,6-\i) {$L_{\i}$};
    \node (R\i) at (3,6-\i) {$R_{\i}$};
  }
  \draw[very thick,red]   (L1) -- (R3);
  \draw[very thick,red]   (L3) -- (R4);
  \draw[very thick,red]   (L4) -- (R1);
  \draw[very thick,blue]  (L2) -- (R5);
  \draw[very thick,blue]  (L5) -- (R2);
  \node[red]  at (0.1,-0.2) {color 1 (3-cycle)};
  \node[blue] at (3.4,-0.2) {color 2 (2-cycle)};
\end{tikzpicture}
\end{center}

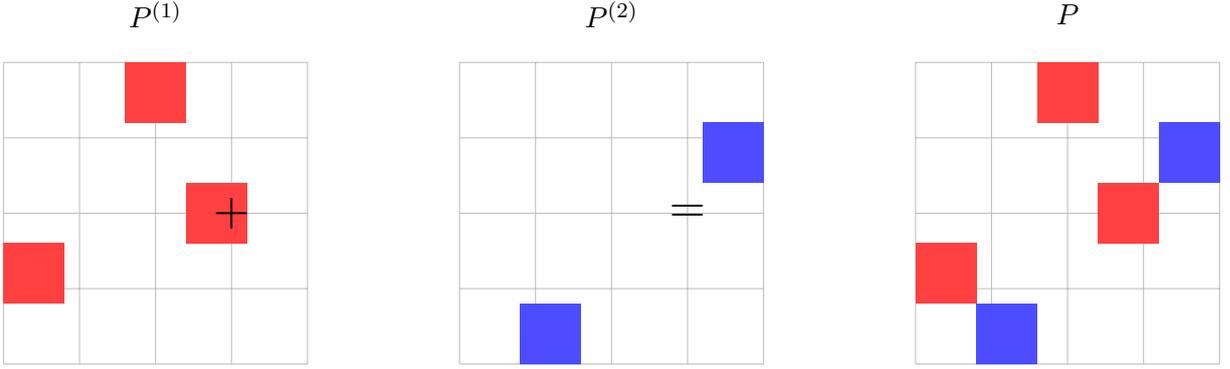
\begin{figure}[H]\centering
\begin{tikzpicture}[x=0.8cm,y=0.8cm]
  \def\n{5}
  \coordinate (L) at (0,0);
  \coordinate (M) at (7.5,0);
  \coordinate (R) at (15.0,0);

  \begin{scope}[shift={(L)}]
    \node at (\n/2,\n+0.8) {$P^{(1)}$};
    \draw[gray!50] (0,0) grid (\n,\n);
    \fill[red!75] (2,4) rectangle ++(1,1);
    \fill[red!75] (3,2) rectangle ++(1,1);
    \fill[red!75] (0,1) rectangle ++(1,1);
  \end{scope}

  \begin{scope}[shift={(M)}]
    \node at (\n/2,\n+0.8) {$P^{(2)}$};
    \draw[gray!50] (0,0) grid (\n,\n);
    \fill[blue!70] (4,3) rectangle ++(1,1);
    \fill[blue!70] (1,0) rectangle ++(1,1);
  \end{scope}

  \begin{scope}[shift={(R)}]
    \node at (\n/2,\n+0.8) {$P$};
    \draw[gray!50] (0,0) grid (\n,\n);
    \fill[red!75]  (2,4) rectangle ++(1,1);
    \fill[red!75]  (3,2) rectangle ++(1,1);
    \fill[red!75]  (0,1) rectangle ++(1,1);
    \fill[blue!70] (4,3) rectangle ++(1,1);
    \fill[blue!70] (1,0) rectangle ++(1,1);
  \end{scope}

  \node[scale=1.6] at ($ (L)!.5!(M) + (0,\n/2) $) {$+$};
  \node[scale=1.6] at ($ (M)!.5!(R) + (0,\n/2) $) {$=$};
\end{tikzpicture}
\caption{Colored decomposition: \(P=P^{(1)}+P^{(2)}\) with disjoint supports.}
\label{fig:colored-decomposition}
\end{figure}

\section{CVRP Examples: from locations to colored permutations}
\label{subsec:worked-mini}

\subsection{Example A: 4 locations 2 Vehicles}
Consider a routing problem with \(\boldsymbol{4}\) locations, \(\boldsymbol{2}\) vehicles. \(\Rightarrow n=3\) customers with  \(K=2\) vehicles. We have \(n=3\) blocks and local alphabet size \(nK=3\cdot2=6\). Total binary variables/qubits \(n^2K=18\).
Use the same distance construction as above. Assume unit demand and take \(d_i\equiv 1\) set  \(\texttt{CAPACITY}=3\), then any single truck can carry up to 3 customers.

\smallskip
\emph{Variables.} \(x_{i,j,k}\) with \(i,j\in\{1,2,3\}\), \(k\in\{1,2\}\).

\emph{Constraints.}
\begin{align}
\label{eq:const}
\text{(block one–hot)}\;\;
&\sum_{i=1}^3\sum_{k=1}^2 x_{i,1,k}=1,\quad
  \sum_{i=1}^3\sum_{k=1}^2 x_{i,2,k}=1,\quad
  \sum_{i=1}^3\sum_{k=1}^2 x_{i,3,k}=1;\\
\text{(each item once)}\;\;
&\sum_{j=1}^3\sum_{k=1}^2 x_{1,j,k}=1,\quad
  \sum_{j=1}^3\sum_{k=1}^2 x_{2,j,k}=1,\quad
  \sum_{j=1}^3\sum_{k=1}^2 x_{3,j,k}=1;\\
\text{(capacity, soft)}\;\;
&\sum_{j=1}^3\sum_{i=1}^3 d_i\,x_{i,j,1}\le Q,\qquad
  \sum_{j=1}^3\sum_{i=1}^3 d_i\,x_{i,j,2}\le Q,
\end{align}
with soft penalty weight \(\lambda_{\mathrm{cap}}\) or enforced as a hard feasibility filter.

\emph{Objective (diagonal).}
\[
\begin{aligned}
H_{\mathrm{obj}}
&=\sum_{k=1}^2\Bigl[\;\sum_{i} w_{\mathrm{dep}\to i}\,x_{i,1,k}
+\sum_{j=1}^{2}\sum_{i,i'} W_{i,i'}\,x_{i,j,k}\,x_{i',j+1,k}
+\sum_{i} w_{i\to\mathrm{dep}}\,x_{i,3,k}\Bigr]\\
&\quad
+\sum_{j=1}^{2}\sum_{k\ne k'}\sum_{i,i'}
\bigl(w_{i\to\mathrm{dep}}+w_{\mathrm{dep}\to i'}\bigr)\,x_{i,j,k}\,x_{i',j+1,k'}.
\end{aligned}
\]

Index customers as $i=1,2,3$ for nodes $2,3,4$ respectively.
The pairwise customer distance matrix $W\in\mathbb R^{3\times 3}$ and depot legs are
\[
W \!=\!
\begin{bmatrix}
0 & 30.41 & 36.40\\
30.41 & 0 & 6.08\\
36.40 & 6.08 & 0
\end{bmatrix},
\]

\[
d(\mathrm{dep}\!\to\! i) \!=\! 
\begin{bmatrix}25.55 & 26.02 & 30.02\end{bmatrix},
\ 
d(i\!\to\!\mathrm{dep}) \!=\!
\begin{bmatrix}25.55 & 26.02 & 30.02\end{bmatrix}.
\]

Pick the feasible colored assignment “truck~1 takes customer~1 at position~1; truck~2 takes
customers~2,3 at positions~2,3”. 

\paragraph{Encoded bitstring.}
The assignment amounts to
\[
j{=}1:(i{=}1,k{=}1),\quad
j{=}2:(i{=}2,k{=}2),\quad
j{=}3:(i{=}3,k{=}2)
\]
which can be encoded by the three one–hot blocks (one per position):
\[
\begin{aligned}
\text{pos }1 &: [1,0,0,0,0,0],\\
\text{pos }2 &: [0,0,0,0,1,0],\\
\text{pos }3 &: [0,0,0,0,0,1].
\end{aligned}
\]
Concatenating these 6-bit blocks (for $n=3$ positions) gives the 18-bit string
\[
\boxed{100000\;000010\;000001}
\]

The per-truck slices (rows=positions $t$, cols=customers $i$) are:
\[
M^{(1)} \;=\;
\begin{bmatrix}
1&0&0\\[2pt]
0&0&0\\[2pt]
0&0&0
\end{bmatrix},
\qquad
M^{(2)} \;=\;
\begin{bmatrix}
0&0&0\\[2pt]
0&1&0\\[2pt]
0&0&1
\end{bmatrix}.
\]
Summing (and transposing) gives the permutation matrix
\[
P \;=\; \Bigl(M^{(1)}+M^{(2)}\Bigr)^{\!\top}
\;=\;
\begin{bmatrix}
1&0&0\\[2pt]
0&1&0\\[2pt]
0&0&1
\end{bmatrix},
\]
which encodes the order “\(\text{pos }1\!\mapsto\! i{=}1,\ \text{pos }2\!\mapsto\! i{=}2,\ \text{pos }3\!\mapsto\! i{=}3\)”.

\subsection{Example B: 5 locations 2 Vehicles}
Similarly consider an example with  \(\boldsymbol{5}\) locationsand  \(\boldsymbol{2}\) vehicles ( \(\Rightarrow n=4\) customers and \(K=2\)).
Blocks \(m=4\). Local alphabet size \(nK=4\cdot2=8\). Total variables/qubits \(n^2K=32\).
With the same \(\texttt{CAPACITY}=3\) and \(d_i\equiv 1\), then a single truck cannot legally serve all
4 customers (load \(4>3\)). The capacity penalty (or a hard filter) therefore discourages
“one truck does everything” assignments and pushes the solution toward a split (e.g.\ \(2{+}2\) or \(3{+}1\)).

\bigskip
\emph{Variables/constraints.} As above, now with \(i,j\in\{1,2,3,4\}\), \(k\in\{1,2\}\).
The same two universal constraints in \eqref{eq:const} make
\(\sum_k x_{i,j,k}\) a \(4\times4\) permutation matrix and the third makes capacity \(\sum_{j=1}^4\sum_{i=1}^4 d_i\,x_{i,j,k}\le 3\) for each \(k\).

\noindent
Index customers as $i=1,2,3,4$ for nodes $2,3,4,5$.
The distance matrix and depot legs are
\[
W \!=\!
\begin{bmatrix}
0 & 43.01 & 37.07 & 23.09\\
43.01 & 0 & 30.41 & 59.08\\
37.07 & 30.41 & 0 & 60.01\\
23.09 & 59.08 & 60.01 & 0
\end{bmatrix},
\]
\[
d(\mathrm{dep}\!\to\! i) \!=\!
\begin{bmatrix}59.03 & 53.04 & 78.10 & 61.41\end{bmatrix},
\;
d(i\!\to\!\mathrm{dep}) \!=\!
\begin{bmatrix}59.03 & 53.04 & 78.10 & 61.41\end{bmatrix}.
\]

Pick the feasible colored assignment with order
\(\text{pos }1\!\mapsto\! i{=}3,\ \text{pos }2\!\mapsto\! i{=}1,\ \text{pos }3\!\mapsto\! i{=}4,\ \text{pos }4\!\mapsto\! i{=}2\),
and truck labels per position \([2,2,1,1]\) (so truck~2 takes the first two, truck~1 the last two). i.e.\ customers $i=1,\ldots,4$ grouped by vehicle label $k=1,2$, with the following assignment
\[
j{=}1:(i{=}3,k{=}2),\quad
j{=}2:(i{=}1,k{=}2),\quad
j{=}3:(i{=}4,k{=}1),\quad
j{=}4:(i{=}2,k{=}1)
\]
is encoded by the following one–hot blocks:
\[
\begin{aligned}
\text{pos }1 &: [0,0,0,0,0,0,1,0],\\
\text{pos }2 &: [0,0,0,0,1,0,0,0],\\
\text{pos }3 &: [0,0,0,1,0,0,0,0],\\
\text{pos }4 &: [0,1,0,0,0,0,0,0].
\end{aligned}
\]
This is equivalent to the 32--bit string
\[
\boxed{
00000010\;00001000\;00010000\;01000000
}
\]

Each 8--bit segment corresponds to one column of the global
assignment tensor $M3$, whose slices $M^{(1)}$ and $M^{(2)}$
reconstruct the permutation matrices in the example above.

The slices are
\[
M^{(1)} \;=\;
\begin{bmatrix}
0&0&0&0\\[2pt]
0&0&0&0\\[2pt]
0&0&0&1\\[2pt]
0&1&0&0
\end{bmatrix},
\qquad
M^{(2)} \;=\;
\begin{bmatrix}
0&0&1&0\\[2pt]
1&0&0&0\\[2pt]
0&0&0&0\\[2pt]
0&0&0&0
\end{bmatrix}.
\]
Hence
\[
P \;=\; \Bigl(M^{(1)}+M^{(2)}\Bigr)^{\!\top}
\;=\;
\begin{bmatrix}
0&1&0&0\\[2pt]
0&0&0&1\\[2pt]
1&0&0&0\\[2pt]
0&0&1&0
\end{bmatrix},
\]
which is a permutation matrix. Each column and each row sums to \(1\), and the per-vehicle slices
\(M^{(k)}\) are disjoint \(0/1\) partial permutation matrices whose sum, after a transpose, gives the customer\(\times\)position permutation matrix \(P\).

\section{Mechanism behind the inherited Fej\'er success bound}
\label{app:fejer-mechanism}

For completeness, we briefly recall the analytic mechanism behind the
dimension-free finite-depth and finite-shot success bound inherited from
Ref.~\cite{onahfinite}. The discussion is given in the notation of the present
paper and may be read as a reference model for the routing specialization. We work on the encoded one-hot manifold
\[
\OH \;:=\; (\mathcal H_1)^{\otimes m},
\qquad
\mathcal H_1 \;:=\; \mathrm{span}\{\ket{e_1},\dots,\ket{e_{n_{\mathrm{loc}}}}\},
\]
where $m$ is the number of blocks and $n_{\mathrm{loc}}$ is the local alphabet
size. In the CVRP construction of the main text one has
$m=n$ and $n_{\mathrm{loc}}=nK$, while in the PDP construction one has
$m=T$ and $n_{\mathrm{loc}}=TK$. The mixer is the block-local XY unitary
\[
U_M(\beta)
\;:=\;
\bigotimes_{b=1}^{m} e^{-i\beta \widetilde H^{(b)}_{XY}},
\]
and the initial state is the uniform one-hot product state
\[
\ket{s_0}
\;:=\;
\ket{s_{\mathrm{blk}}}^{\otimes m},
\qquad
\ket{s_{\mathrm{blk}}}
\;:=\;
\frac{1}{\sqrt{n_{\mathrm{loc}}}}
\sum_{j=1}^{n_{\mathrm{loc}}}\ket{e_j}.
\]
The depth-$p$ CE--QAOA state is
\begin{equation}
\label{eq:app-ce-state}
\ket{\psi_p}
\;=\;
\Bigl(\prod_{r=1}^{p} U_M(\beta_r)\,U_C(\gamma_r)\Bigr)\ket{s_0},
\qquad
U_C(\gamma)\;:=\;e^{-i\gamma H_C},
\end{equation}
where $H_C$ is diagonal in the computational basis
$\{\ket{z}\}_{z\in\OH}$, so that
\[
H_C\ket{z}=E(z)\ket{z}.
\]

The success theorem of Ref.~\cite{onahfinite} is formulated in a
lattice-normalized phase picture. Accordingly, we assume that after a known
global rescaling of $H_C$, the relevant energies admit a phase
representation on the circle with controlled resolution. Fix a base angle
$\gamma$ and define the wrapped phase
\[
\theta(z)\;:=\;\gamma E(z)\ \ (\mathrm{mod}\ 2\pi).
\]
Let $\Omega^\star\subseteq\OH$ denote the set of optimal basis strings, and let
$\theta^\star$ be the common wrapped phase of the optimal energy level. The
wrapped phase separation is
\begin{equation}
\label{eq:app-phase-gap}
\delta
\;:=\;
\min_{y\notin\Omega^\star}
\operatorname{dist}_{\mathbb T}\!\bigl(\theta(y),\theta^\star\bigr),
\end{equation}
where
\[
\operatorname{dist}_{\mathbb T}(\phi,\varphi)
\;:=\;
\min_{k\in\mathbb Z}|\phi-\varphi+2\pi k|
\;\in\;[0,\pi].
\]
The main theorem applies when $\delta>0$. As discussed in
Ref. \cite{onahfinite}, this exact phase-gap assumption may be relaxed by averaging arguments when several near-optimal levels cluster in
phase.

The mechanism is most transparent in a dephased reference model. Let
\[
\mathcal T(\rho)
\;:=\;
\int_{0}^{2\pi}\frac{d\phi}{2\pi}\,
e^{-i\phi H_C}\rho\,e^{+i\phi H_C}
\]
be the cost-basis dephasing channel. This channel removes coherences between
distinct $H_C$ eigenspaces and leaves only the diagonal occupation weights.
We use it only as an analytic baseline that exposes a positive trigonometric
filter. We do not claim that the resulting surrogate dynamics is identical to
the fully coherent CE--QAOA circuit.

After inserting $\mathcal T$ after each layer, the diagonal evolves by a
classical Markov update. Writing
$v^{(r)}(z):=\bra{z}\rho_r\ket{z}$ for the diagonal distribution after $r$
layers, one finds
\[
v^{(r)}(z)
\;=\;
\sum_{y\in\OH} M_{\beta_r}(z|y)\,v^{(r-1)}(y),
\qquad
M_{\beta}(z|y)
\;:=\;
\bigl|\bra{z}U_M(\beta)\ket{y}\bigr|^2.
\]
Each $M_\beta$ is unistochastic and hence doubly stochastic. Iterating from
the initial diagonal
\[
v^{(0)}(z)=|\braket{z}{s_0}|^2
\]
gives the dephased mixer envelope
\begin{equation}
\label{eq:app-Wp}
W_p(z;\boldsymbol\beta)
\;:=\;
\bigl[M_{\beta_p}\cdots M_{\beta_1}v^{(0)}\bigr](z).
\end{equation}
This quantity depends only on the mixer schedule and the encoded initial
state. It is nonnegative and normalized on $\OH$. To expose the cost-phase dependence, one introduces the normalized Dirichlet
polynomial
\begin{equation}
\label{eq:app-dirichlet}
D_p(H_C)
\;:=\;
\frac{1}{\sqrt{p+1}}
\sum_{r=0}^{p} e^{-ir\gamma H_C}.
\end{equation}
On a computational-basis eigenstate $\ket{z}$ this acts by multiplication with
\[
\frac{1}{\sqrt{p+1}}
\sum_{r=0}^{p} e^{-ir\theta(z)}.
\]
The corresponding positive spectral weight is the Fej\'er kernel
\begin{equation}
\label{eq:app-fejer}
F_p\!\bigl(\theta(z)-\theta^\star\bigr)
\;:=\;
\frac{1}{p+1}
\left|
\sum_{r=0}^{p} e^{ir(\theta(z)-\theta^\star)}
\right|^2
\;=\;
\frac{1}{p+1}
\left(
\frac{
\sin\!\bigl(\frac{(p+1)(\theta(z)-\theta^\star)}{2}\bigr)
}{
\sin\!\bigl(\frac{\theta(z)-\theta^\star}{2}\bigr)
}
\right)^2.
\end{equation}
Thus $F_p$ is a nonnegative trigonometric polynomial that peaks at the target
phase and suppresses off-peak phases.

The dephased reference distribution is obtained by applying the positive
filter to a state whose diagonal is the envelope $W_p(\cdot;\boldsymbol\beta)$.
Equivalently, one may define it directly by the normalized factorized law
\begin{equation}
\label{eq:app-factorization}
\Pr_p^{\mathrm{ref}}[z]
\;:=\;
\frac{
W_p(z;\boldsymbol\beta)\,
F_p(\theta(z)-\theta^\star)
}{
\sum_{y\in\OH}
W_p(y;\boldsymbol\beta)\,
F_p(\theta(y)-\theta^\star)
}.
\end{equation}
This equation isolates the two ingredients that drive the inherited success
bound. The envelope $W_p$ captures how the mixer spreads mass across the
encoded manifold, while the Fej\'er factor favors strings whose cost phase is
close to the optimal phase.

It is convenient to define the optimal-set envelope weight
\begin{equation}
\label{eq:app-Cbeta}
C_\beta
\;:=\;
\sum_{x\in\Omega^\star} W_p(x;\boldsymbol\beta).
\end{equation}
Since $F_p(0)=p+1$, the total probability of the optimal set under the
reference law is bounded below by a competition between the on-peak mass
$(p+1)C_\beta$ and the off-peak contribution of all nonoptimal strings.
The relevant off-peak quantity is
\begin{equation}
\label{eq:app-Mp}
M_p(\delta)
\;:=\;
\max_{|\vartheta|\ge \delta} F_p(\vartheta).
\end{equation}
Using the standard Fej\'er estimate, one has
\begin{equation}
\label{eq:app-Mp-bound}
M_p(\delta)
\;\le\;
\frac{1}{(p+1)\sin^2(\delta/2)}.
\end{equation}

The mechanism summarized above leads directly to the inherited success bound.

\begin{theorem}[Reference Fej\'er factorization and success bound]
\label{thm:app-fejer-success}
Assume that the routing specialization admits the factorized reference law
\eqref{eq:app-factorization}, and assume a positive wrapped phase separation
$\delta>0$ in the sense of \eqref{eq:app-phase-gap}. Then the success
probability of sampling an optimal basis string satisfies
\begin{equation}
\label{eq:app-q0}
q_0
\;:=\;
\Pr_p^{\mathrm{ref}}[\Omega^\star]
\;=\;
\sum_{x\in\Omega^\star}\Pr_p^{\mathrm{ref}}[x]
\;\ge\;
\frac{(p+1)C_\beta}
{(p+1)C_\beta + M_p(\delta)(1-C_\beta)}.
\end{equation}
In particular, the lower bound is dimension-free. It depends only on the
filter order $p$, the phase gap $\delta$, and the optimal-set envelope weight
$C_\beta$.
\end{theorem}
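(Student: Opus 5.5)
The proof is a direct estimate on the normalized ratio in \eqref{eq:app-factorization}. Write the numerator mass on the optimal set as $A:=\sum_{x\in\Omega^\star}W_p(x;\boldsymbol\beta)F_p(\theta(x)-\theta^\star)$ and the off-optimal mass as $B:=\sum_{y\notin\Omega^\star}W_p(y;\boldsymbol\beta)F_p(\theta(y)-\theta^\star)$. Then $q_0=A/(A+B)$. Because $t\mapsto t/(t+B)$ is increasing in $t$ and decreasing in $B$, it suffices to compute $A$ exactly and bound $B$ from above.

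\emph{Step 1 (on-peak mass).} By the definition of $\theta^\star$, every $x\in\Omega^\star$ has $\theta(x)=\theta^\star$ modulo $2\pi$. Hence $F_p(0)=\frac{1}{p+1}\bigl|\sum_{r=0}^p 1\bigr|^2=p+1$, and so $A=(p+1)\sum_{x\in\Omega^\star}W_p(x)=(p+1)C_\beta$ by \eqref{eq:app-Cbeta}. Here $F_p$ is $2\pi$-periodic, so wrapping is harmless.

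\emph{Step 2 (off-peak mass).} For each $y\notin\Omega^\star$, the phase-gap definition \eqref{eq:app-phase-gap} gives $\operatorname{dist}_{\mathbb T}(\theta(y),\theta^\star)\ge\delta$. Choose the representative $\vartheta$ of $\theta(y)-\theta^\star$ in $[-\pi,\pi]$. Then $|\vartheta|\ge\delta$, and periodicity gives $F_p(\theta(y)-\theta^\star)=F_p(\vartheta)\le M_p(\delta)$ by \eqref{eq:app-Mp}. Since $W_p\ge0$ (it is obtained from the nonnegative $v^{(0)}$ by doubly stochastic maps), we get $B\le M_p(\delta)\sum_{y\notin\Omega^\star}W_p(y)$. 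Normalization of $W_p$ on $\OH$ gives $\sum_{y\notin\Omega^\star}W_p(y)=1-C_\beta$. Normalization holds because each $M_\beta$ is doubly stochastic, hence column-stochastic, and so preserves total mass, and $v^{(0)}$ sums to one.

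\emph{Step 3 (combine).} Substituting into the monotone expression gives
\[
q_0\ \ge\ \frac{(p+1)C_\beta}{(p+1)C_\beta+M_p(\delta)(1-C_\beta)}.
\]
The denominator is positive as soon as $C_\beta>0$; if $C_\beta=0$ the bound is trivially $0$, provided one interprets the reference law as well defined. The dimension-free remark is then immediate, since only $p$, $\delta$ and $C_\beta$ appear, and \eqref{eq:app-Mp-bound} can be inserted to make the bound explicit.

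\emph{Main obstacle.} The only delicate point is the periodicity and wrapping bookkeeping in Step 2. We must make sure that the torus distance, rather than the raw difference, controls the argument of $F_p$, and that the maximum in $M_p(\delta)$ is taken over $\delta\le|\vartheta|\le\pi$. Everything else is immediate from the definitions.
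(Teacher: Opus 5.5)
Your proposal is correct and follows the same argument as the paper. You evaluate the optimal-set numerator exactly via $F_p(0)=p+1$ to get $(p+1)C_\beta$, bound every off-peak Fej\'er weight by $M_p(\delta)$, and use the normalization of $W_p$ to bound the denominator by $(p+1)C_\beta+M_p(\delta)(1-C_\beta)$; your extra bookkeeping (periodicity of $F_p$, taking the maximum in $M_p(\delta)$ over $\delta\le|\vartheta|\le\pi$, nonnegativity and normalization of $W_p$) only makes explicit what the paper's short proof leaves implicit.
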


\begin{proof}
Summing \eqref{eq:app-factorization} over $x\in\Omega^\star$ and using
$F_p(0)=p+1$ gives
\[
q_0
=
\frac{(p+1)C_\beta}
{\sum_{y\in\OH}W_p(y;\boldsymbol\beta)\,F_p(\theta(y)-\theta^\star)}.
\]
For every $y\notin\Omega^\star$, the phase-gap assumption implies
$F_p(\theta(y)-\theta^\star)\le M_p(\delta)$. Hence the denominator is at most
\[
(p+1)C_\beta + M_p(\delta)\sum_{y\notin\Omega^\star}W_p(y;\boldsymbol\beta)
=
(p+1)C_\beta + M_p(\delta)(1-C_\beta),
\]
which yields \eqref{eq:app-q0}.
\end{proof}

Equation~\eqref{eq:app-q0} immediately yields the finite-depth and finite-shot
corollaries used in the main text. If one requires $q_0\ge 1-\varepsilon$, a
sufficient condition is
\[
(p+1)^2
\;\ge\;
\frac{1-\varepsilon}{\varepsilon}\cdot
\frac{1-C_\beta}{C_\beta}\cdot
\csc^2(\delta/2).
\]
Thus the filter order needed to peak on the optimum is finite whenever the
phase gap is positive. Likewise, introducing
\[
A:=(p+1)^2\sin^2(\delta/2),
\qquad
x:=A\,C_\beta,
\]
one obtains
\[
q_0 \ge \frac{x}{(1-C_\beta)+x}\ge \frac{x}{1+x},
\]
and therefore the standard one-hit estimate
\[
S
\;\ge\;
\frac{1}{q_0}\ln\frac{1}{\epsilon}
\;\le\;
\left(1+\frac{1}{x}\right)\ln\frac{1}{\epsilon}.
\]
Hence, once $(p+1)^2\sin^2(\delta/2)C_\beta$ is bounded below by an inverse
polynomial, polynomially many shots suffice to recover an optimum with high
confidence. Finally, we stress again that the dephased Fej\'er construction is used here
as an analytic reference mechanism. Its role is to isolate a positive
trigonometric filter on cost phases and make explicit how mixer spreading and
phase isolation combine to produce a dimension-free optimality bound. The
coherent circuit may further reshape this weight through interference, but the
reference model already identifies the key instance-dependent parameters that
govern success.

\subsection{LP-anchored mixer-angle selection for routing}
\label{subsubsec:lp-anchored-beta}

For routing problems, classical relaxations provide a natural polynomial-time
anchor for selecting mixer angles.  Vehicle-routing formulations and their
linear relaxations are central objects in classical exact and heuristic
methods, including set-partitioning formulations, column generation, and
branch-price-and-cut algorithms
\cite{toth2014vehicle,costa2019branchpricecut}.  Let \(x^{\mathrm{LP}}\)
denote the solution of a linear relaxation of the routing formulation.  Its
entries may represent fractional assignment variables, fractional arc
variables, or fractional customer-position indicators.  The relaxation
provides directional information about promising regions of the encoded space.

Let \(\mathcal A\) denote a set of elementary routing features, such as
assignment features, arc features, or adjacent-position features.  For a
configuration \(z\in\Omega\), let \(\chi_a(z)\in\{0,1\}\) indicate whether
feature \(a\in\mathcal A\) is present in \(z\).  Define the LP-alignment score
\[
        S_{\mathrm{LP}}(\beta)
        :=
        \sum_{a\in\mathcal A}
        x_a^{\mathrm{LP}}\,
        \mathbb E_{z\sim W_p(\cdot;\beta)}[\chi_a(z)].
\]
Equivalently,
\[
        S_{\mathrm{LP}}(\beta)
        =
        \mathbb E_{z\sim W_p(\cdot;\beta)}
        \left[
        \sum_{a\in\mathcal A}x_a^{\mathrm{LP}}\chi_a(z)
        \right].
\]
For pair or arc features, one may use
\[
        S_{\mathrm{LP}}^{(2)}(\beta)
        :=
        \sum_{a,b\in\mathcal A}
        x_{ab}^{\mathrm{LP}}\,
        \mathbb E_{z\sim W_p(\cdot;\beta)}
        [\chi_{ab}(z)],
\]
where \(\chi_{ab}(z)\) indicates whether the pair or arc feature \((a,b)\)
appears in \(z\).

A combined relaxation-energy surrogate is
\[
        S_{\mathrm{route}}(\beta)
        :=
        S_{\mathrm{LP}}^{(2)}(\beta)
        -
        \eta\,\mu_\beta
        -
        \rho\,\sigma_\beta ,
        \qquad
        \eta,\rho\ge 0.
\]
The corresponding angle selection rule is
\[
        \beta_{\mathrm{LP}}
        \in
        \arg\max_{\beta\in\mathcal B}
        S_{\mathrm{route}}(\beta).
\]
This criterion selects mixer angles whose envelope aligns with the fractional
routing relaxation while also favoring low expected cost and controlled
variance.

\begin{definition}[LP-anchored mixer-angle surrogate]
\label{def:lp-anchored-surrogate}
Let \(x^{\mathrm{LP}}\) be a fractional solution of a polynomial-time
relaxation of the routing instance.  Let \(\chi_a\) and \(\chi_{ab}\) denote
encoded routing features.  The LP-anchored mixer score is
\[
        S_{\mathrm{LP}}(\beta)
        =
        \sum_{a}x_a^{\mathrm{LP}}\,
        \mathbb E_{W_p(\cdot;\beta)}[\chi_a],
\]
or, for pair features,
\[
        S_{\mathrm{LP}}^{(2)}(\beta)
        =
        \sum_{a,b}x_{ab}^{\mathrm{LP}}\,
        \mathbb E_{W_p(\cdot;\beta)}[\chi_{ab}].
\]
A relaxation-energy mixer score is
\[
        S_{\mathrm{route}}(\beta)
        =
        S_{\mathrm{LP}}^{(2)}(\beta)
        -
        \eta\mu_\beta
        -
        \rho\sigma_\beta .
\]
\end{definition}

This surrogate is well matched to vehicle-routing formulations because
fractional relaxations often expose useful arc, route, assignment, or
customer-position structure before integrality is enforced
\cite{toth2014vehicle,costa2019branchpricecut}.  The LP anchor guides the
choice of \(\beta\) toward regions of the encoded manifold already favored by the classical relaxation.

The preceding constructions can be combined into a single computable
preselection objective.  A general-purpose surrogate is
\[
        \mathcal S(\beta)
        :=
        \log
        \mathbb E_{z\sim W_p(\cdot;\beta)}
        \bigl[e^{-\lambda C(z)}\bigr]
        +
        \alpha S_{\mathrm{LP}}^{(2)}(\beta)
        -
        \rho\sigma_\beta
        +
        \nu \widetilde C_\beta ,
\]
with tunable weights \(\alpha,\rho,\nu\ge 0\).  The selected mixer angle is
\[
        \beta^\star
        \in
        \arg\max_{\beta\in\mathcal B}
        \mathcal S(\beta).
\]
For general constrained problems, one may omit the LP term and use
\[
        \mathcal S_{\mathrm{gen}}(\beta)
        :=
        \log Z_\lambda(\beta)
        -
        \rho\sigma_\beta .
\]
For routing instances, a natural choice is
\[
        \mathcal S_{\mathrm{route}}(\beta)
        :=
        \log Z_\lambda(\beta)
        +
        \alpha S_{\mathrm{LP}}^{(2)}(\beta)
        -
        \rho\sigma_\beta .
\]

\begin{algorithm}[t]
\caption{Surrogate-based mixer-angle preselection}
\label{alg:surrogate-beta-selection}
\begin{algorithmic}[1]
\Require Encoded instance \(I\), objective \(C\), mixer-envelope family
\(W_p(\cdot;\beta)\), angle search domain \(\mathcal B\), inverse temperature
\(\lambda>0\), optional threshold \(\tau\), optional LP relaxation
\(x^{\mathrm{LP}}\).
\Ensure Mixer angle \(\beta^\star\).

\State Compute cheap classical information:
\[
        \tau \leftarrow \text{greedy/local-search/LP-rounded incumbent}
\]
and, when available,
\[
        x^{\mathrm{LP}} \leftarrow \text{solution of a polynomial relaxation}.
\]

\For{\(\beta\in\mathcal B\)}
    \State Estimate or compute
    \[
        \mu_\beta
        =
        \mathbb E_{W_p(\cdot;\beta)}[C],
        \qquad
        \sigma_\beta^2
        =
        \operatorname{Var}_{W_p(\cdot;\beta)}(C).
    \]
    \State Estimate or compute the soft low-energy score
    \[
        Z_\lambda(\beta)
        =
        \mathbb E_{W_p(\cdot;\beta)}[e^{-\lambda C}].
    \]
    \If{LP relaxation information is available}
        \State Estimate or compute
        \[
            S_{\mathrm{LP}}^{(2)}(\beta)
            =
            \sum_{a,b}x_{ab}^{\mathrm{LP}}
            \mathbb E_{W_p(\cdot;\beta)}[\chi_{ab}].
        \]
    \Else
        \State Set \(S_{\mathrm{LP}}^{(2)}(\beta)=0\).
    \EndIf
    \State Form the combined surrogate
    \[
        \mathcal S(\beta)
        =
        \log Z_\lambda(\beta)
        +
        \alpha S_{\mathrm{LP}}^{(2)}(\beta)
        -
        \rho\sigma_\beta .
    \]
\EndFor

\State Return
\[
        \beta^\star
        \in
        \arg\max_{\beta\in\mathcal B}
        \mathcal S(\beta).
\]
\end{algorithmic}
\end{algorithm}

\subsection{Simulation Details}
\label{app:impl}
The one-hot manifold is prepared by a multi-block encoder
\texttt{multi\_block\_encoder(block\_sz, m\_blocks)}, yielding the product start state
$\ket{s_0}=\ket{s_{\mathrm{blk}}}^{\otimes n}$, where each block is the uniform one-hot superposition over the $S=nK$ symbols. For each instance we construct a QUBO for the global-position formulation using
\texttt{create\_cvrp\_qubo\_global\_positions}. The QUBO includes:
(i) the global each-customer-once constraint as a squared affine penalty,
(ii) a capacity penalty term controlled by \texttt{lam\_cap}, and
(iii) the routing objective that couples consecutive positions on the global timeline and
charges depot-close/depot-open terms at vehicle-switch boundaries as in
Eq.~\eqref{eq:cvrp-A-obj}. We use fixed weights
\[
A_{\text{pen}}=4.0,\qquad B_{\text{obj}}=1.0,\qquad \lambda_{\text{cap}}=A_{\text{pen}},
\]
so the diagonal cost is
\[
H_C \;=\; H_{\mathrm{pen}} + H_{\mathrm{obj}},
\]
and the corresponding Pauli operator is generated via
\texttt{build\_pauli\_from\_qubo}, using the standard Ising map \(x\mapsto (1-Z)/2\).
Although \(H_C\) is diagonal in the computational basis, it contains extensive cross-block
couplings, notably the assignment penalty and the adjacent-position routing terms, so the
phase separator \(e^{-i\gamma H_C}\) is entangling across blocks.

\paragraph{Mixer.}
We implement a tensor product of identical block-local XY mixers:
\[
U_M(\beta)\;=\;\bigotimes_{j=1}^{n} U^{(j)}_M(\beta),
\qquad
U_M^{(j)}(\beta)=e^{-i\beta\,\widetilde H_{XY}^{(j)}}.
\]
In code this is realized by appending \texttt{xy\_mixer\_block(block\_sz, $\beta$)}
to each contiguous register segment of length $S=nK$. This mixer preserves the one-hot
subspace \emph{by construction}, ensuring that the per-position one-hot constraint is
satisfied throughout the evolution (ideal kernel setting). This is crucial for ensuring
that the algorithm samples from the structured encoded space rather than from the full
$2^{n^2K}$ bitstring space.

\paragraph{Ansatz depth and parameterization.}
All experiments use depth $p=1$:
\[
\ket{\psi(\gamma,\beta)}
\;=\;
U_M(\beta)\,e^{-i\gamma H_C}\,\ket{s_0}.
\]
We build the circuit using \texttt{QAOAAnsatz} with
\texttt{initial\_state=init\_state} and \texttt{mixer\_operator=mixer}, and we explicitly
identify the $\gamma$ and $\beta$ parameters robustly by name-matching (``gamma'', ``$\gamma$'', ``beta'', ``$\beta$'') rather than by index. 

\paragraph{Grid search.}
For each instance, we perform a dense uniform grid search over $(\gamma,\beta)$:
\[
\gamma \in \mathrm{linspace}(0,\pi,S+1),\qquad
\beta \in \mathrm{linspace}(0,\pi,S+1),
\]
where $S=nK$ is the block size. Thus, the grid has $(S+1)^2$ points, scaling as
$O((nK)^2)$. This specific choice ties the angular resolution to the local alphabet size
and matches the empirical ``kernel resolution'' used in our CE--QAOA studies.

\paragraph{Shot budget.}
At each grid point we sample with a shots budget
\[
S_{\text{shots}} \;=\; (nK)^3.
\]
We report per grid point the fraction of feasible outputs whose observed frequency exceeds this baseline (\texttt{share\_above\_baseline}) as a diagnostic for encoded anticoncentration relative to the uniform distribution on the encoded space.

\paragraph{Backend and simulation model.}
All circuits are executed using Qiskit Aer in MPS mode. This choice allows simulation of moderately sized circuits beyond exact statevector limits,
while controlling numerical cost via truncation. The bond-dimension cap (128) and truncation
thresholds (typically $10^{-3}$) were chosen to balance runtime and numerical stability.
We emphasize that these truncations are \emph{part of the simulator model} and therefore
the reported results reflect an MPS-approximated circuit execution.

\paragraph{Transpilation strategy: translator path with HLS disabled.}
To ensure consistent circuit synthesis and to avoid high-level synthesis heuristics that may
introduce variability across runs, we force the \texttt{translator} translation method and
disable HLS by setting an empty \texttt{HLSConfig(op\_types=[])}. We transpile into the fixed
basis gate set:
\[
\{\texttt{id},\ \texttt{rz},\ \texttt{sx},\ \texttt{x},\ \texttt{cx},\ \texttt{measure},\ \texttt{barrier}\},
\]
at optimization level 3. Importantly, we transpile the parameterized template circuit
\emph{once} and then perform fast parameter binding inside the grid loop:
\[
\texttt{template\_tr.assign\_parameters(\{gamma:..., beta:...\})}.
\]
This avoids repeated compilation overhead and ensures that all grid points correspond to the
same compiled circuit topology with only rotation angles changed.

\paragraph{Sampling primitive and frequency reconstruction.}
We use \texttt{BackendSampler} for circuit execution. The sampler returns quasi-distributions
over bitstrings. When raw shot-count dictionaries are unavailable, we convert these to
frequency proxies by
\[
\nu(b) \;=\; p(b)\,S_{\text{shots}},
\]
and retain only outcomes with \(\nu(b)>0\). We emphasize that \(\nu(b)\) is then a weighted
frequency proxy rather than an exact hardware count. All multiplicity-based diagnostics in this
paper are computed from these frequency proxies. Because key formats can be integer-encoded or
string-encoded depending on backend and settings, we normalize keys to fixed-width
bitstrings via a helper (\texttt{\_key\_to\_bitstr}) that removes spaces and applies
zero-padding to width \(Q=n^2K\). See implementation in Ref. \cite{onahcvrpdata}.

\paragraph{Exact checker.}
For each feasible bitstring \(b\), evaluates its exact routing objective
\[
E(b)=\langle b \mid H_{\mathrm{obj}} \mid b\rangle
\]
(or, if desired, the full diagonal score \(\langle b\mid H_C\mid b\rangle\)). PHQC then returns the feasible sample of minimum score among all samples collected across the full grid \(\mathcal G\). Thus PHQC succeeds as soon as an optimal feasible bitstring is observed at least once.

\begin{definition}[Optimal feasible set at a given instance]
Let
\[
\mathcal X^\star
:=
\bigl\{
b \in \{0,1\}^{n^2K} :
\textsc{FeasibleGlobalPositions}(b)=\texttt{true},
\ \ E(b)=E^\star
\bigr\},
\]
where
\[
E^\star
=
\min\{
E(b):
\textsc{FeasibleGlobalPositions}(b)=\texttt{true}
\}.
\]
For each parameter pair \((\beta,\gamma)\), define the optimal feasible mass
\[
p_\star(\beta,\gamma)
:=
\Pr_{\psi_p(\beta,\gamma)}[\,b\in\mathcal X^\star\,].
\]
\end{definition}

\begin{lemma}[No-hit bound at a good grid point]
\label{lem:phqc-nohit}
Assume there exists a parameter pair
\((\beta^\sharp,\gamma^\sharp)\in\mathcal G\)
such that
\[
p_\star^\sharp
:=
p_\star(\beta^\sharp,\gamma^\sharp)
>0.
\]
If PHQC draws \(S\) independent samples at each grid point, then the probability of observing no optimal feasible bitstring at the good grid point satisfies
\[
\Pr[\text{no hit at }(\beta^\sharp,\gamma^\sharp)]
=
(1-p_\star^\sharp)^S
\le
e^{-p_\star^\sharp S}.
\]
Hence
\[
S \ge \Bigl\lceil \frac{\ln(1/\delta)}{p_\star^\sharp}\Bigr\rceil
\quad\Longrightarrow\quad
\Pr[\text{at least one optimal feasible hit at }(\beta^\sharp,\gamma^\sharp)]
\ge 1-\delta.
\]
\end{lemma}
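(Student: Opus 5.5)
The plan is to treat the $S$ shots at the good grid point $(\beta^\sharp,\gamma^\sharp)$ as independent Bernoulli trials. Each shot prepares the same state $\ket{\psi_p(\beta^\sharp,\gamma^\sharp)}$ afresh and measures in the computational basis, so the outcomes $b_1,\dots,b_S$ are i.i.d.\ with law $\Pr_{\psi_p(\beta^\sharp,\gamma^\sharp)}$. Whether a shot counts as a hit is a deterministic function of the sample: the membership test $b\in\mathcal X^\star$ is decided by \textsc{FeasibleGlobalPositions} and the exact score $E(b)$. Hence the indicator events $\{b_r\in\mathcal X^\star\}$ are independent and each has probability exactly $p_\star^\sharp$, by the definition of the optimal feasible mass.

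The equality then follows by independence. The event ``no hit at $(\beta^\sharp,\gamma^\sharp)$'' is the intersection $\bigcap_{r=1}^S\{b_r\notin\mathcal X^\star\}$, so its probability is the product $(1-p_\star^\sharp)^S$.

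For the inequality, I would use the elementary bound $1-x\le e^{-x}$ for all real $x$, which holds by convexity of $e^{-x}$ (its tangent line at $0$ is $1-x$). Since $0<p_\star^\sharp\le 1$, the base $1-p_\star^\sharp$ lies in $[0,1)$. Raising both sides of $0\le 1-p_\star^\sharp\le e^{-p_\star^\sharp}$ to the $S$-th power preserves the order and gives $(1-p_\star^\sharp)^S\le e^{-p_\star^\sharp S}$.

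For the sample-size consequence, I would note that $S\ge\lceil\ln(1/\delta)/p_\star^\sharp\rceil$ implies $p_\star^\sharp S\ge\ln(1/\delta)$. Therefore $e^{-p_\star^\sharp S}\le\delta$, and taking complements gives probability at least $1-\delta$ of at least one optimal feasible hit. Here $\delta\in(0,1)$ is the confidence parameter, not the phase gap of Appendix~\ref{app:fejer-mechanism}.

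The only point needing care is the modelling assumption that shots are independent and identically distributed. Everything else is routine. It would also be worth remarking that a single hit suffices for PHQC: the oracle and exact scoring are deterministic, so once any $b\in\mathcal X^\star$ is observed, the returned $E^\star$ equals the optimum. This ties the bound to the success of Alg.~\ref{alg:PHQC}, and shots at other grid points can only help.
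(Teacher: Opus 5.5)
Your proposal is correct and follows the paper's own argument. The paper likewise treats the shots at $(\beta^\sharp,\gamma^\sharp)$ as i.i.d.\ Bernoulli$(p_\star^\sharp)$ hit indicators and writes $\Pr[N_\star=0]=(1-p_\star^\sharp)^S\le e^{-p_\star^\sharp S}$. You additionally spell out the steps the paper leaves implicit: the inequality $1-x\le e^{-x}$, the conversion to $p_\star^\sharp S\ge\ln(1/\delta)$, and that $\delta$ here is a confidence parameter, not the phase gap.
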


\begin{proof}
Let \(X_r\sim\mathrm{Bernoulli}(p_\star^\sharp)\) indicate whether shot \(r\) at
\((\beta^\sharp,\gamma^\sharp)\) lands in \(\mathcal X^\star\). Then
\[
N_\star=\sum_{r=1}^{S} X_r
\]
counts optimal feasible hits at that grid point. Therefore
\[
\Pr[N_\star=0]=(1-p_\star^\sharp)^S \le e^{-p_\star^\sharp S},
\]
which gives the claim.
\end{proof}

\begin{theorem}[Conditional exact-recovery guarantee for PHQC]
\label{thm:phqc-exact}
Assume the coarse grid \(\mathcal G\) contains a parameter pair
\((\beta^\sharp,\gamma^\sharp)\) with optimal feasible mass
\(p_\star^\sharp>0\). Run PHQC with \(S\) shots at each grid point and let the deterministic checker return the minimum-cost feasible bitstring over all samples. Then
\[
S \ge \Bigl\lceil \frac{\ln(1/\delta)}{p_\star^\sharp}\Bigr\rceil
\quad\Longrightarrow\quad
\Pr[\text{PHQC returns a globally optimal feasible bitstring}]
\ge 1-\delta.
\]
In particular, if \(p_\star^\sharp \ge n^{-k}\) for some constant \(k>0\), then
\[
S = O\!\bigl(n^k\log(1/\delta)\bigr),
\]
and if moreover \(|\mathcal G|=\mathrm{poly}(n)\), the full quantum-plus-classical pipeline has polynomial shot complexity and polynomial total runtime.
\end{theorem}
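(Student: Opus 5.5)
The argument reduces the global success event to the single good grid point, invokes Lemma~\ref{lem:phqc-nohit} there, and then uses the determinism of the classical checker to turn one optimal hit into a correct output.

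\emph{Reduction to one grid point.} Let $A$ be the event that at least one of the $S$ shots drawn at $(\beta^\sharp,\gamma^\sharp)$ lies in $\mathcal X^\star$. PHQC processes every sample from every grid point. So once $A$ occurs, some optimal feasible bitstring $b_0\in\mathcal X^\star$ has been passed to $\textsc{FeasibleGlobalPositions}$.

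\emph{Correctness of the checker on $A$.} By definition of $\mathcal X^\star$, the oracle accepts $b_0$, and PHQC then evaluates $E(b_0)=E^\star$. Every other accepted sample $b$ is feasible, so $E(b)\ge E^\star$ by definition of $E^\star$ as the minimum over feasible strings. Any rejected sample never enters the comparison. Hence the running minimum $E^\star_{\mathrm{out}}$ returned by Alg.~\ref{alg:PHQC} equals $E^\star$. The returned $b^\star$ is accepted by the oracle and has score $E^\star$, so $b^\star\in\mathcal X^\star$. Ties are harmless, since any element of $\mathcal X^\star$ counts as a globally optimal feasible bitstring.

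\emph{Probability bound.} It follows that
\[
\Pr[\text{PHQC returns an optimum}]\ \ge\ \Pr[A]\ =\ 1-(1-p_\star^\sharp)^S\ \ge\ 1-e^{-p_\star^\sharp S}.
\]
By Lemma~\ref{lem:phqc-nohit}, this is at least $1-\delta$ whenever $S\ge\lceil\ln(1/\delta)/p_\star^\sharp\rceil$. The shots at the other grid points, and their independence from those at $(\beta^\sharp,\gamma^\sharp)$, play no role, since we only need a lower bound.

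\emph{Polynomial scaling.} If $p_\star^\sharp\ge n^{-k}$, substituting gives $S=\lceil n^k\ln(1/\delta)\rceil=O(n^k\log(1/\delta))$. The total shot count is $|\mathcal G|\,S$, which is polynomial when $|\mathcal G|=\mathrm{poly}(n)$.

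For the runtime, Corollary~\ref{cor:phqc-postproc-poly} bounds the classical filtering cost by $O(|\mathcal G|\,S\,n^2K)$. Exact scoring of each accepted sample via \eqref{eq:cvrp-A-obj} adds $O(n)$ per sample, because it suffices to read the $n$ decoded symbols and sum $n+1$ edge costs. We also assume, as is implicit in the statement, that each CE--QAOA circuit execution has polynomial cost. The number of qubits is $n^2K$, and the depth-$p$ circuit is built from polynomially many block-XY and diagonal gates.

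The main obstacle is not probabilistic but definitional. One must check that the deterministic checker's acceptance set coincides with the feasible set used to define $E^\star$, so that no accepted sample can score below $E^\star$. This holds by construction, since $\mathcal X^\star$ and $E^\star$ are defined through the same oracle and the same score $E$ used in Alg.~\ref{alg:PHQC}. The proof should state explicitly that this consistency is what makes a single hit sufficient.
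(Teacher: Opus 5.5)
Your proof is correct and follows the paper's argument: reduce to the good grid point, apply Lemma~\ref{lem:phqc-nohit}, and use the deterministic checker to turn a single hit in $\mathcal X^\star$ into an optimal output. You are more careful than the paper, which neither spells out the running-minimum argument (no accepted sample can score below $E^\star$) nor justifies the polynomial shot-complexity and runtime clause.
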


\begin{proof}
Since PHQC samples every parameter pair in \(\mathcal G\), it samples the good grid point \((\beta^\sharp,\gamma^\sharp)\). By Lemma~\ref{lem:phqc-nohit}, with the stated shot budget the probability of observing at least one bitstring in \(\mathcal X^\star\) at that point is at least \(1-\delta\). The deterministic checker filters infeasible outcomes and returns the minimum-cost feasible bitstring among all observed samples. Therefore, once any bitstring from \(\mathcal X^\star\) appears, PHQC returns the optimal solution.
\end{proof}

\printbibliography

\end{document}